\numberwithin{equation}{section}
\newtheoremstyle{thmlemcorr}{10pt}{10pt}{\itshape}{}{\bfseries}{.}{10pt}{{\thmname{#1}\thmnumber{ #2}\thmnote{ (#3)}}}
\newtheoremstyle{thmlemcorr*}{10pt}{10pt}{\itshape}{}{\bfseries}{.}\newline{{\thmname{#1}\thmnumber{ #2}\thmnote{ (#3)}}}
\newtheoremstyle{remexample}{10pt}{10pt}{}{}{\bfseries}{.}{10pt}{{\thmname{#1}\thmnumber{ #2}\thmnote{ (#3)}}}
\newtheoremstyle{ass}{10pt}{10pt}{}{}{\bfseries}{.}{10pt}{{\thmname{#1}\thmnumber{ A#2}\thmnote{ (#3)}}}
\theoremstyle{thmlemcorr}
\newtheorem{theorem}{Theorem}
\numberwithin{theorem}{section}
\newtheorem{lemma}[theorem]{Lemma}
\newtheorem{definition}[theorem]{Definition}
\theoremstyle{thmlemcorr*}
\newtheorem{theorem*}{Theorem}
\newtheorem{lemma*}[theorem]{Lemma}
\newtheorem{corollary*}[theorem]{Corollary}
\newtheorem{proposition*}[theorem]{Proposition}
\newtheorem{problem*}[theorem]{Problem}
\newtheorem{conjecture*}[theorem]{Conjecture}
\newtheorem{definition*}[theorem]{Definition}
\newtheorem{assumption*}[theorem]{Assumption}
\theoremstyle{remexample}
\theoremstyle{ass}
\newtheorem{assumption}{Assumption}
\newcommand{\R}{\mathbb{R}}
\def\XXint#1#2#3{{\setbox0=\hbox{$#1{#2#3}{\int}$}
\vcenter{\hbox{$#2#3$}}\kern-.5\wd0}}
\renewcommand{\epsilon}{\varepsilon}
\renewcommand{\phi}{\varphi}
\begin{document}


\title[Matching Code and Law: Achieving Algorithmic Fairness with Optimal Transport]{Matching Code and Law: Achieving Algorithmic Fairness with Optimal Transport}

\author{Meike Zehlike}
\address{\textit{Meike Zehlike:} Max Planck Institute for Software Systems,
Campus E1~5, 66123 Saarbr\"ucken, Germany}
\email{meikezehlike@mpi-sws.de}

\author{Philipp Hacker}
\address{\textit{Philipp Hacker:} Law Department, Humboldt University of Berlin, Unter den Linden~6, 10099 Berlin, Germany; A.SK Fellow, WZB Social Sciences Centre; Research Fellow, Centre for Law, Economics and Society/Centre for Blockchain Technologies, UCL}
\email{philipp.hacker@rewi.hu-berlin.de}

\author{Emil Wiedemann}
\address{\textit{Emil Wiedemann:} Institute of Applied Analysis, Ulm University, Helmholtzstr.~18, 89081 Ulm, Germany}
\email{emil.wiedemann@uni-ulm.de}

\begin{abstract}

Increasingly, discrimination by algorithms is perceived as a societal and legal problem. As a response, a number of criteria for implementing algorithmic fairness in machine learning have been developed in the literature. This paper proposes the Continuous Fairness Algorithm (CFA$\theta$) which enables a continuous interpolation between different fairness definitions. More specifically, we make three main contributions to the existing literature. First, our approach allows the decision maker to continuously vary between specific concepts of individual and group fairness. As a consequence, the algorithm enables the decision maker to adopt intermediate ``worldviews'' on the degree of discrimination encoded in algorithmic processes, adding nuance to the extreme cases of ``we're all equal'' (WAE) and ``what you see is what you get'' (WYSIWYG) proposed so far in the literature. Second, we use optimal transport theory, and specifically the concept of the barycenter, to maximize decision maker utility under the chosen fairness constraints. Third, the algorithm is able to handle cases of intersectionality, i.e., of multi-dimensional discrimination of certain groups on grounds of several criteria. We discuss three main examples (credit applications; college admissions; insurance contracts) and map out the legal and policy implications of our approach. The explicit formalization of the trade-off between individual and group fairness allows this post-processing approach to be tailored to different situational contexts in which one or the other fairness criterion may take precedence. Finally, we evaluate our model experimentally.

\end{abstract}







\maketitle




\section{Introduction}
Suppose a decision needs to be made for a large number of individuals that fall into different groups. We call the agent or institution taking this decision the \emph{decision maker}. As a motivating example, imagine the decision maker to be a provider of credit scores which are subsequently used by financial institutions to decide on credit applications. The score indicates whether an applicant should be made a credit offer at all, and if so under what conditions. There is ample evidence, however, which suggests that credit scores are ridden with bias, unfairly discriminating against minorities (Pasquale 2015; Rothmann et al. 2014; Hurley and Adebayo 2016). Simultaneously, providers of credit scores growingly harness machine learning technology to predict credit risk for individuals (cf. Fuster et al. 2018; German Federal Ministry of Justice 2018; Hurley and Adebayo 2016). In fact, in many other scenarios involving a large number of subjects and data, decision makers increasingly employ algorithmic methods to reach their decisions. Other high-stakes settings in which algorithmic decision making is increasingly prevalent include hiring or insurance decisions, as we discuss in more detail below, or criminal justice decisions (see Berk et al., 2018). We therefore seek to establish a novel framework that subjects the decision maker to fairness constraints which can be operationalized in a setting of algorithmic decision making;\footnote{ We would like to point out that our framework also covers cases of non-algorithmic decision making and thus applies widely to decisions implicating fairness.} these constraints facilitate a trade-off between specific concepts  of individual and group fairness (thereby minimizing bias in algorithmic decision making) while preserving as much utility for the decision maker as possible.

For each individual in our credit scoring example, certain \emph{observable data} is available, like credit history, degree certificates, scores quantitatively reflecting an interviewer's judgment, etc. Part of this data might have been collected or processed through procedures biased against certain groups of individuals. Indeed, evidence is mounting that bias generally haunts algorithmic decision procedures, particularly those based on machine learning methods (cf. Barocas and Selbst 2016; Reuters 2018). Moreover, even if the data was handled correctly, it might allow for inferences about an individual's ethnicity, gender, or other criteria about membership in legally protected groups. In other words, in a number of settings, full disclosure of the observable data to the decision maker is likely to result in \emph{unfair} decisions.

Instead, in our framework an \emph{impartial agent}, like a regulatory body, will transform the observable ``raw data" into a \emph{fair representation} of the data (cf. Zemel et al. 2013 for the terminology), which can then be used by the decision maker to take its decision. For instance, if the observable data is biased against one group of individuals, the data would be ``corrected" by being transported to a fair representation that mitigates, or even fully removes, this bias.

The map from an individual's observable data to the fair representation is the main subject of this work. It should satisfy, as closely as possible, a number of requirements reflecting the ``fairness" of the process while maintaining decision maker utility to a maximum:
\begin{enumerate}
\item\label{individual} Individual fairness: Similar observable data should be mapped to similar fair data.
\item\label{group} Group fairness: The fair data of an individual should not allow for any inference on the individual's group membership (statistical parity).
\item\label{monotone} Monotonicity: If one individual scores better in the raw data, then they should also score better, or equally well, in the fair data. 
\item\label{utility} Decision maker utility: The fair data should conceal as little information as possible to the decision maker.
\end{enumerate}         
These requirements will be refined, and defined more rigorously, below (see Section~\ref{model}). They have been well-studied except for monotonicity, and it is known that they cannot all be perfectly satisfied at the same time (Friedler et al. 2016); this is a general problem in fairness constraints in algorithmic decision making  (Chouldechova 2017; Kleinberg et al. 2016). For instance, if groups have different raw score distributions, it will be necessary to attribute different fair scores to two individuals with similar raw scores if they belong to different groups  in order to achieve group fairness; however, this will violate monotonicity and individual fairness. If, as will typically be the case, different groups exhibit different statistics in the raw data, then indeed the only way to satisfy requirements (\ref{individual})--(\ref{monotone}) simultaneously is to assign the \emph{same} fair score to \emph{every} individual (cf. Kleinberg et al. 2016). This, of course, would neither be ``fair" in any meaningful sense, nor would it be of any use to the decision maker, as they would be completely ``blindfolded" in making their decision. Correspondingly, requirement (\ref{utility}) would be violated to an extreme extent in this case. 

Our general discussion is close to the works of Dwork et al. (2012) and of Feldman et al. (2015). In contrast to these and other fairness approaches, however, our model contains a stringent mathematical proof of optimality. As the model will show in greater detail, we introduce a novel framework which greatly facilitates the mathematical analysis of the ``raw-to-fair" map. One of the main novelties of this paper is thus to work in a \emph{continuous} rather than a discrete setting, in the Euclidean space $\R^n$. We would like to stress at the outset that our framework is continuous in a dual sense. First, our probability measures, i.e., the distributions of (raw and fair) scores across groups, are continuous (absolutely continuous with respect to Lebesgue measure, to be precise). This not only allows for highly fine-grained scoring, but it is also a mathematical prerequisite for the existence of a unique \emph{optimal transport map} that we will use in our model to maximize decision maker utility. The choice of a continuous probability measure is justified as long as the number of individuals in question is sufficiently large, and as long as it is unlikely that many of the individuals have \emph{exactly} the same raw score.\footnote{ In particular this requires that the raw data contains sufficiently fine degrees of evaluation. If this is granted, then adding some stochastic noise to the evaluation of the raw data will remove undesired ``concentration" effects, if necessary.} Continuous distributions are frequently used in economics (Romei and Ruggieri 2014, p. 589) and admit a richer mathematical theory. In this paper, specifically, we will exploit some elements of the mathematical theory of \emph{optimal transport}. Working with continuous distributions is generally not a significant restriction, as every continuous distribution can be arbitrarily approximated by discrete ones in Wasserstein distance (i.e.\ in the weak topology), and vice versa. 

Second, another type of continuity resides in the possibility to variably choose a parameter, $\theta$, that allows us to skew the model more towards individual or toward group fairness, as conditions warrant. Friedler et al. (2016) consider not only the observable data and the decision finally made by the decision maker, but also a so-called \emph{construct space}, which can be thought to contain the ``actual'' or ``true'' properties of every individual. The (possibly inaccurate or biased) evaluation of the ``true'' data is then modelled by a map from construct space to observable space. Construct space and the construct-to-observable map are, by definition, not measurable or observable in any way. Rather, the relation between construct and observable space must be \emph{postulated} in an axiomatic fashion. To this end, the authors of Friedler et al. (2016) propose two extreme ``worldviews'': WAE (``we're all equal'') and WYSIWYG (``what you see is what you get''). In the first case, it is assumed that any differences between groups in the distribution of the raw data  in the observed space are due to discrimination, incorrect data handling, or other exogenous factors, and that these differences should therefore not be visible in construct space; insofar as it is relevant for the decision, different groups are assumed to have the same "true" distributions of scores (which differs from the observable distributions). In the second case, the assumption is that the observable data truly reflects the properties of the individuals and can thus be immediately used for decision making. In this worldview, effects of bias or inaccurate collection of data are either flatly denied or tolerated. This assumption therefore collapses the distinction between construct and observed space. Both assumptions, however, are extreme cases: either groups are postulated to be perfectly equal in construct space, or data in the observed space is assumed to be perfectly correct.     

A main contribution of this paper is to add nuance to these distinct worldviews. We suggest a framework that allows to continuously interpolate between WAE and WYSIWYG. Indeed, in the mathematical theory of optimal transport, the technique of \emph{displacement interpolation} is used to continuously move from one probability distribution to another in a particularly natural way, and we apply this tool in the context of fair representations. To this end, we introduce a parameter, $\theta$, that allows us to continuously move from WAE to WYSIWYG. As we will see, this implies that we can equally move from a maximal fulfillment of individual (WYSIWYG: $\theta$ = 0) to a maximal fulfillment of group fairness (WAE: $\theta$ = 1). The resulting Continuous Fairness Algorithm (CFA$\theta$) allows us to formalize the trade-off between these fairness concepts, and therefore to adapt the framework on a case-by-case basis to different decision making contexts in which different fairness constraints may be normatively desirable (on this desideratum, see Binns 2018, p. 6-7). We would like to stress that our model is able to guarantee monotonicity within groups, but not for members belonging to different groups; this is also the source of the violation of individual fairness in our transportation exercise. Furthermore, since our model operates with a continuous distribution, the need for arbitrary randomization in order to achieve a fair distribution is minimized.\footnote{ This is due to the fact that a necessary precondition for the need to randomize is that different individuals have exactly the same fair score; this is excluded in a continuous setting, and can be neglected in the discrete setting if the evaluation procedure is sufficiently fine-grained, see the discussion above.}

Quite obviously, one key question for the construction of a fair representation is the choice of the target representation for $\theta$ = 1: among all possible target measures, one ought to find the one that optimizes individual fairness and decision maker utility. Indeed, the requirement, stemming from the desire to achieve statistical parity, that all the raw score distributions of the different groups be mapped onto one single representation (in this extreme case), does not say anything about what this distribution should look like. Dwork et al. (2012, p. 221) choose the distribution of the privileged group as the target distribution. By contrast, we show that there is a potentially more convincing target distribution which occupies a ``middle ground" between the distributions of all the different groups (the so-called \emph{barycenter} with respect to Wasserstein-2 distance\footnote{ In the one-dimensional case, a kind of barycenter has been used by Feldman et al. (2015), but with respect to the Wasserstein-1 distance.}). Choosing the barycenter as the target distribution has two important advantages: first, it does not impose the distribution of one ``privileged" or majority group onto the other groups. Second, it is the distribution that is closest to all the raw distributions in a least square sense; therefore, it preserves decision maker utility to a maximum if we use optimal transport theory to map all raw distributions onto the barycenter. The \emph{existence} of such an intermediate distribution, the barycenter, is mathematically highly nontrivial and was only proved in 2011 by Agueh and Carlier (2011) under certain assumptions that, fortunately, are plausible for many scenarios of algorithmic decision making.\footnote{ These conditions include absolute continuity of the raw distribution, which can be arbitrarily approximated by discrete distributions in Wasserstein space, as noted above; and a quadratic cost (or utility) function, a condition that can be fulfilled by initially transforming the utility function of the decision maker appropriately.} 

The approach that is probably closest to ours is Feldman et al. (2015, p. 264) (see also the discussion in Section~4). In contrast to their paper, which uses a "median" distribution (i.e.\ the barycenter in the Wasserstein-1 metric), our approach introduces the barycenter in Wasserstein-2 (which operates with the notion of least squares in Wasserstein-2 distance), and uses the unique optimal transport map from raw to fair scores. Quadratic cost and distance functions seem plausible if one assumes that the utility of output data declines ever more sharply (and not only linearly) the more it is removed from a sufficiently precise approximation of the truth; in other words, if the distance between an output and ground truth becomes too large, the data is almost of no use to the decision maker any more. This seems realistic particularly when important differential consequences are attached to different outputs (as in our examples); arguably, it is only in such cases of important distinctions that a decision maker will resort to an algorithmic model in the first place. The barycenter allows us not only to maximize decision maker utility under fairness constraints, but also to vary between different fairness measures (individual vs. group/WYSIWYG vs.\ WAE), depending on the concrete decision-making framework our algorithm is applied to. Further important differences between our approach and the one adopted in Feldman et al. (2015) are that we are able to handle high-dimensional raw scores; and cases of intersectionality (see next paragraph).

To summarize, our CFA$\theta$ has a number of advantageous features, but also some limitations. Concerning the advantages, first, in the WAE case ($\theta$ = 1), group fairness is fulfilled as well as within-group monotonicity. Second, individual fairness and decision maker utility are optimized by the choice of the barycenter as the target distribution, and the optimal transport toward it. Third, the possibility of choosing $\theta$yields substantial flexibility for the decision maker. Hence, our model allows to implement any intermediate worldview the decision maker may have. It does not impose a \emph{specific} fairness constraint. The decision maker may choose $\theta$ such, for example, that the resulting distribution does not correspond to full group fairness, but still fulfills the so-called 80 percent rule, an important threshold for US disparate impact doctrine (EEOC 2015, Section 4 D.; Barocas and Selbst 2016, p. 701 et seq.) that is increasingly gaining traction in EU anti-discrimination law, too. It requires that the probability of a member of a disprivileged group being positively labeled is at least 80 percent of the respective probability of a member of the privileged group (see Feldman et al. 2015; Zafar et al. 2017). Hence, $\theta$ can be consciously chosen to force compliance with existing anti-discrimination legislation. As the discussion of the legal and policy implications of our model shows, the choice of $\theta$ is a deeply normative one and can be adapted to different situations in which individual or rather group fairness should be the primary goal. Fourth, being a post-processing approach, our model can be applied to any machine learning model and is not constrained to specific types (such as linear models). Even more importantly, this implies that the most efficient predictor can be chosen, only to be repaired if issues of discrimination arise, an issue stressed by Kleinberg et al. (2018). This is particularly relevant when, as in the credit scoring setting, an institution (the bank) receives scores from a third party (the scorer, e.g., FICO) without having access to the training data and the model. The institution can apply our algorithm to the outcome distribution regardless of how it was learned. Fifth, from an economic viewpoint, quadratic cost functions are the best choice, particularly in credit scoring settings (see Fuster et al. 2018, p. 7). Sixth, we are able to handle the problem of \emph{intersectionality}, i.e.\ the phenomenon that an individual may belong to several (protected) groups at the same time. As a recent judgment by the Court of Justice of the European Union has shown,\footnote{ CJEU case C-443/15 \emph{Parris} ECLI:EU:C:2016:897.} intersectionality presents a pressing problem in real-life decisions. As mentioned, in our approach to it, in contrast to Dwork et al. (2012), we do not move the data distribution of a discriminated group to that of the privileged one, but rather we transport the distributions of all groups to respective intermediate representations that are chosen by displacement interpolation (the $\theta$ score). Importantly, we may choose $\theta$ differently for different groups; this allows us to treat particularly disadvantaged groups (for example those that fulfill several protected criteria at the same time) "fairer" (with respect to group fairness) than other ones, if so desired. 

This leads us to the first of two important limitations that would like to stress. First, if a generally disadvantaged group performs exceptionally well (better than the barycenter) in one setting, attaching a high $\theta$ value to that group removes part of this unexpected advantage (by moving the group closer to the -- unexpectedly worse -- barycenter). Therefore, the setting of the values should always be combined with a (at least superficial) performance evaluation of the respective groups in the raw scores. Second, as mentioned before, our model does not work well with groups that are small or have little variance because they depart too far from the continuity assumption. As a consequence, while in mathematical theory, it may guarantee, for a certain $\theta$ value, a certain percentage of members of the specific protected group in the top k positions (like Zehlike et al. 2017), this may not be the case in real-world implementations if group sizes are too small.  This will be explored further in the data-driven experiments.

Importantly, to reiterate, in all of these data corrections by the CFA$\theta$, we take the raw score as a given output of a prior learning task; therefore, the full force of machine learning can be unleashed to calculate the raw score which is only transformed into a fair score after its elaboration. While being a stand-alone procedure, our approach may nevertheless be fruitfully complemented by fairness or data collection/quality constraints applied to the calculation of the raw score itself (see below, Section~\ref{relation}).

The remainder of the paper is organized as follows: Section 2 introduces the mathematical model and establishes its fundamental optimality properties. Readers unfamiliar with mathematical notation may jump right to Section 3 which discusses a number of examples. Section 4 places the model within the broader framework of related work. Section 5 offers detailed legal and policy implications. Section 6 contains a data-driven evaluation. Section 7 concludes.

\section{The Model}\label{model}    
Let $X$ be a set of individuals that may have certain traits indexed by $1,\ldots,N$. We are given a \emph{group membership map} $g:X\to\{0;1\}^N$ whose $i$-th component indicates whether or not an individual carries trait number $i$. This induces a partition of $X$ via
\begin{equation*}
X=\bigcup_{k\in\{0;1\}^N}g^{-1}(k)
\end{equation*} 
into at most $2^N$ groups $X_k:=g^{-1}(k)$ (note that some of these could be empty). Let us call the number of (non-empty) groups $G$.

Various data may be collected from individuals in the process of decision-making. This includes qualitative data such as personal interviews, expert opinions, letters of motivation, etc. Obviously, it should not be the objective of an abstract theory of fairness to design a map from this data to a quantitative ranking that reflects the decision maker's preferences. Rather, we assume such a map as given, and therefore our starting point is a \emph{score function} $S:X\to\R^n$. The score, which may be composed of $n$ partial scores for different categories, is assumed to express the decision maker's utility function in the sense that \emph{if the decision maker had the full observable data at their disposal, then they would always prefer an individual with a higher score in each category to one with a lower one}.\footnote{ In many cases, $n=1$ may be sufficient. Indeed, in most real-life applications, at some point all the information about the individuals must be brought into a linear ordering.}  

The restrictions of $S$ to $X_k$ are denoted $S_k$. On $\R^n$, we introduce probability measures\footnote{ We use the term ``probability measure" in the sense of ``normed measure''; no randomness is insinuated by this terminology.} $\mu$ and $\{\mu_k\}_{k=1,\ldots,G}$ that encode the score distribution within the entire set of individuals, and in the $k$-th group, respectively. This means that $\mu(B)$ is the proportion of individuals with a score contained in a subset $B\subset\R^n$, and likewise $\mu_k(B)$ denotes the proportion of individuals with score in $B$ in the $k$-th group. If we index the individuals themselves by a vector in $\R^n$, i.e.\ $X=\R^n$, then we can write 
\begin{equation*}
w_k:=|X_k|=|g^{-1}(k)|
\end{equation*} 
for the proportion of members of group $k$ in the total population $X$, and 
\begin{equation*}
\mu_k=dx^n\circ S_k^{-1}.
\end{equation*}
where $dx^n$ is the Lebesgue measure (i.e., the Euclidean volume, or, in case $n=1$, the length)\footnote{The pullback measure $dx^n\circ S_k^{-1}$ is defined by $dx^n\circ S_k^{-1}(B):=dx^n(\{x\in\R^n: S_k(x)\in B\})$, which gives the proportion of individuals whose score lies in the set $B$.}
Accordingly, we have 
\begin{equation*}
\mu=dx^n\circ S^{-1}=\sum_{k=1}^Gw_k\mu_k.
\end{equation*}
Our decisive continuity assumption can be stated as follows:
\begin{assumption}\label{contassumption}
The measures $\mu$ and $\mu_k$, $k=1,\ldots,G$, are absolutely continuous with respect to Lebesgue measure and have finite variance, with densities $f\in L^1(\R^n)$ and $f_k\in L^1(\R^n)$.
\end{assumption}

The \emph{fair representations} for each group will be maps $T_k$ from $\R^n$ to $\R^n$. They transform a ``raw score'' into a ``fair score" for group $k$. The resulting fair score for an individual $x\in X_k$ is then given by $T_k(S(x))$. The raw-to-fair map $T_k$ will transport the measure $\mu_k$ to the pushforward measure 
\begin{equation}\label{pushforward}
\nu_k=\mu_k\circ T_k^{-1}
\end{equation}
representing the ``fair score distribution'' for group $k$.

We will ask the converse question: Given a target distribution $\nu_k$, how can we choose $T_k$ so that $\nu_k=\mu_k\circ T_k^{-1}$? Any such map $T_k$ is called a \emph{transport map} from $\mu_k$ to $\nu_k$. In general, there are many such transport maps. In the sequel we will describe how to choose $\nu_k$ and corresponding transport maps $T_k$ in order to guarantee a maximal amount of fairness in the sense of requirements (\ref{individual})--(\ref{utility}), using an algorithm which we call (CFA$\theta$). Once these data have been constructed, the decision maker will be presented the distribution
\begin{equation*}
\bar{\nu}:=\sum_{k=1}^Gw_k\nu_k
\end{equation*} 
and will make a decision based on this distribution; an individual $x\in X_k$ will thus be classified through their fair score $T_k(S(x))$. 

We first need to impose target distributions $\nu_k$. To this end, let us discuss the extreme worldviews WYSIWYG and WAE, which will form the endpoints of our interpolation, within our model:

\subsection*{WYSIWYG}
When the ``raw'' score is deemed a true and fair representation of reality, nothing needs to be done, hence we set $\nu_k=\mu_k$ for all $k=1,\ldots,G$ (thereby maximising decision maker utility), and accordingly $\nu=\mu$. Monotonicity then forces $T_k=id$ for all $k$, thus optimising individual fairness at the same time. Of course, if $\mu_k$ are different, then group fairness is violated.  

\subsection*{WAE}  
Under the hypothesis that any differences between the $\mu_k$ emerge solely from undesired exogenous factors, such as bias, and should be removed, there should exist a single target distribution $\nu$ independent of $k$ so that $T_k$ transports $\mu_k$ to $\nu$. This produces statistical parity, hence group fairness is optimised, whereas the other three requirements will typically be violated. The problem is then to find the common fair distribution $\nu$ and the transport maps $T_k$ minimising these violations.

\begin{definition}[Optimal transport map]
Let $\mu, \nu$ be two probability measures on $\R^n$ with finite variance. An \emph{optimal transport map} is a transport map between $\mu$ and $\nu$ that minimises the cost functional
\begin{equation}\label{cost}
C(\mu,\nu,T):=\int_{\R^n}|x-T(x)|^2d\mu(x)
\end{equation}
among all transport maps from $\mu$ to $\nu$.
\end{definition}
In general, there need not exist any transport map between two probability measures.\footnote{ This is typically the case for discrete measures, so that the notion of transport map needs to be relaxed to transport \emph{plan}. In the context of algorithmic fairness, this leads to the necessity of randomisation, as in Dwork et al. (2012).} If there exists a unique optimal transport map $T$ between $\mu$ and $\nu$ however, then the so-called \emph{Wasserstein distance} between $\mu$ and $\nu$ is given by
\begin{equation}\label{wasser}
W_2(\mu,\nu):= C(\mu,\nu,T)^{1/2}.
\end{equation}
The Wasserstein distance forms a metric on the space of all probability measures on $\R^n$.

Other cost functions than $|\cdot|^2$ can be used, but the quadratic cost function has particularly nice properties and appears most appropriate in view of the applications we are interested in. The following theorem is a seminal result of Brenier (1987, 1991) (see also Theorem 1.26 in Ambrosio and Gigli 2013):
\begin{theorem}\label{brenier}
Let $\mu, \nu$ be probability measures on $\R^n$ with finite variance. If $\mu$ is absolutely continuous with respect to Lebesgue measure (cf.\ Assumption A\ref{contassumption}), then there exists a unique optimal transport map $T$ between $\mu$ and $\nu$, which is cyclically monotone, i.e.\
\begin{equation*}
(x-y)\cdot(T(x)-T(y))\geq0\quad\text{for all $x,y\in\R^n$.}
\end{equation*}
\end{theorem}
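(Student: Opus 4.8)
The plan is to follow the classical route via Kantorovich's relaxation together with the structure theory of cyclically monotone sets. First I would relax the problem: instead of minimising \eqref{cost} over transport maps, minimise $\int_{\R^n\times\R^n}|x-y|^2\,d\gamma(x,y)$ over the set $\Pi(\mu,\nu)$ of \emph{couplings}, i.e.\ probability measures on $\R^n\times\R^n$ with marginals $\mu$ and $\nu$. This set is nonempty (it contains the product $\mu\otimes\nu$) and, since the marginals are fixed and each is tight, it is tight; by Prokhorov's theorem it is weakly compact, the cost functional is weakly lower semicontinuous, and the finite-variance hypothesis makes the cost finite. Hence an optimal coupling $\gamma$ exists. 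Expanding $|x-y|^2=|x|^2-2\,x\cdot y+|y|^2$ and discarding the two terms that depend only on the fixed marginals, minimising the cost is equivalent to maximising the correlation $\int x\cdot y\,d\gamma$.

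The central step is to extract a map from $\gamma$. I would first show that the optimal $\gamma$ is concentrated on a cyclically monotone set $\Gamma\subset\R^n\times\R^n$: if mass were supported on finitely many points violating cyclical monotonicity, a cyclic reassignment of the targets would strictly increase $\int x\cdot y\,d\gamma$, contradicting optimality; for the quadratic cost this is exactly ordinary cyclical monotonicity of $\Gamma$. Then I invoke Rockafellar's theorem, which says that every cyclically monotone set lies in the graph of the subdifferential $\partial\varphi$ of some lower semicontinuous convex function $\varphi:\R^n\to\R\cup\{+\infty\}$, so $\gamma$ is concentrated on $\{(x,y):y\in\partial\varphi(x)\}$. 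At this point the absolute continuity of $\mu$ does the decisive work: a convex function is differentiable Lebesgue-almost everywhere, hence (because $\mu\ll dx^n$) $\mu$-almost everywhere, so $\partial\varphi(x)=\{\nabla\varphi(x)\}$ is single-valued for $\mu$-a.e.\ $x$. This forces $\gamma$ to be concentrated on the graph of $T:=\nabla\varphi$, i.e.\ $\gamma$ is the image of $\mu$ under $x\mapsto(x,T(x))$; thus $T$ is a genuine transport map from $\mu$ to $\nu$ attaining the minimal coupling cost. Since every transport map induces a coupling of the same cost and $T$ realises the minimum over all couplings, $T$ is optimal among transport maps. Finally, monotonicity of the gradient of a convex function, $(x-y)\cdot(\nabla\varphi(x)-\nabla\varphi(y))\ge0$, is precisely the asserted inequality.

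For uniqueness I would exploit that the set of optimal couplings is convex: any optimal transport map is, by the same cyclical-monotonicity argument, $\mu$-a.e.\ the gradient of a convex function pushing $\mu$ to $\nu$, so if two optimal maps existed, the average of the two induced couplings would again be optimal and hence concentrated on the graph of a single convex gradient, which is only possible if the two maps coincide $\mu$-a.e. I expect the main obstacle to be the passage from ``$\gamma$ supported in $\partial\varphi$'' to ``$\gamma$ carried by the graph of $\nabla\varphi$'': it is the interplay of Rockafellar's convex-analytic representation with the almost-everywhere differentiability of convex functions and the hypothesis $\mu\ll dx^n$ that converts the abstract optimal coupling into an actual map, and it is exactly here that the absolute continuity of $\mu$ is indispensable.
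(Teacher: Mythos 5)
The paper does not prove this statement; it is quoted as Brenier's theorem with a citation to Brenier (1987, 1991) and to Theorem 1.26 in Ambrosio and Gigli (2013). Your outline is exactly the standard argument underlying that cited result -- Kantorovich relaxation, cyclical monotonicity of the optimal plan, Rockafellar's theorem, almost-everywhere differentiability of the convex potential combined with $\mu\ll dx^n$ to obtain $T=\nabla\varphi$, and uniqueness via convexity of the set of optimal plans -- and it is correct, so there is nothing to compare beyond noting that you have supplied the proof the paper delegates to the literature.
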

Note that cyclic monotonicity coincides with the usual notion of monotonicity in the case $n=1$.

\begin{definition}[Displacement interpolation, cf.\ Remark 2.13 in Ambrosio and Gigli (2013)]\label{displacement}
Let $\mu, \nu$ be two probability measures on $\R^n$ that admit a unique optimal transport map $T$. The \emph{displacement interpolation} between $\mu$ and $\nu$ is a one-parameter family $[0,1]\ni\theta\mapsto\mu^\theta$ defined as
\begin{equation*}
\mu^\theta = \mu\circ (T^\theta)^{-1},
\end{equation*}
where the map $T^\theta:\R^n\to\R^n$ is given by $T^\theta=(1-\theta)Id+\theta T$.
\end{definition}
Clearly, $\mu^0=\mu$ and $\mu^1=\nu$, and it is known that the curve $\{\mu^\theta\}_{\theta\in[0,1]}$ is the unique geodesic with respect to the Wasserstein distance connecting $\mu$ and $\nu$.  

We will call $\theta$ the \emph{group fairness parameter}. It is a fundamental modelling parameter that allows to choose any worldview between WYSIWYG and WAE.

As a final tool, we need the notion of \emph{barycenter} of a family $\{\mu_k\}_{1,\ldots,G}$ with corresponding weights $\{w_k\}_{1,\ldots,G}$:

\begin{theorem}[Barycenter in Wasserstein space (Agueh and Carlier (2011))]\label{bary} 
Let $\{\mu_k\}_{1,\ldots,G}$ be a family of probability measures satisfying Assumption A\ref{contassumption}, and let $\{w_k\}_{k=1,\ldots,G}$ be positive weights with $\sum_{k=1}^Gw_k=1$. Then there exists a unique probability measure $\nu$ on $\R^n$ that minimizes the functional
\begin{equation*}
\nu\mapsto \sum_{k=1}^Gw_kW^2_2(\mu_k,\nu).
\end{equation*}  
This measure is called the \emph{barycenter} of $\{\mu_k\}_{1,\ldots,G}$ with weights $\{w_k\}_{1,\ldots,G}$.
\end{theorem}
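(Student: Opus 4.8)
The plan is to treat existence and uniqueness separately, the former by the direct method of the calculus of variations, the latter by a strict-convexity argument adapted to the geometry of Wasserstein space.

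For existence, write $V(\nu):=\sum_{k=1}^G w_k W_2^2(\mu_k,\nu)$. First I would note $V\ge 0$ and $V(\mu_1)=\sum_k w_k W_2^2(\mu_k,\mu_1)<\infty$, since all $\mu_k$ have finite variance; hence $m:=\inf_\nu V(\nu)\in[0,\infty)$. Take a minimizing sequence $(\nu_j)$. The bound $w_1 W_2^2(\mu_1,\nu_j)\le V(\nu_j)$ stays bounded, so by the triangle inequality for $W_2$ and the finiteness of the second moment of $\mu_1$ the second moments $\int_{\R^n}|x|^2\,d\nu_j$ are uniformly bounded. A uniform second-moment bound yields tightness (Chebyshev), so by Prokhorov's theorem a subsequence converges narrowly, $\nu_j\toweak\nu^\ast$, with $\nu^\ast$ of finite variance by lower semicontinuity of the second moment. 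Since $\nu\mapsto W_2^2(\mu_k,\nu)$ is lower semicontinuous for narrow convergence, $V(\nu^\ast)\le\liminf_j V(\nu_j)=m$, so $\nu^\ast$ is a minimizer.

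For uniqueness I would argue by strict convexity. Suppose $\nu_0\ne\nu_1$ are two minimizers; the aim is to produce an interpolant $\nu_{1/2}$ with $V(\nu_{1/2})<m$. The engine is the strong-convexity estimate available along generalized geodesics: if $\rho$ is absolutely continuous and $T_0,T_1$ are the Brenier optimal maps from $\rho$ to $\nu_0,\nu_1$ (Theorem \ref{brenier}), then the curve $\theta\mapsto\big((1-\theta)T_0+\theta T_1\big)_\#\rho$ satisfies, at $\theta=\tfrac12$,
\begin{equation*}
W_2^2\big(\rho,\nu_{1/2}\big)\le\tfrac12 W_2^2(\rho,\nu_0)+\tfrac12 W_2^2(\rho,\nu_1)-\tfrac14\int_{\R^n}|T_0-T_1|^2\,d\rho,
\end{equation*}
where the last integral dominates $W_2^2(\nu_0,\nu_1)$ because $(T_0,T_1)_\#\rho$ is an admissible coupling of $\nu_0$ and $\nu_1$; this is a direct expansion of the quadratic cost using Brenier's theorem. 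If such estimates could be based at a common interpolant and summed against the weights $w_k$, they would give $V(\nu_{1/2})\le m-\tfrac14 W_2^2(\nu_0,\nu_1)<m$, the desired contradiction.

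The hard part is exactly that this estimate is anchored at a base measure $\rho$, and the natural choice $\rho=\mu_k$ differs from term to term, so the interpolants $\big((1-\theta)T_k^0+\theta T_k^1\big)_\#\mu_k$ are not the same for different $k$ and do not assemble into a single competitor $\nu_{1/2}$. This reflects a genuine geometric feature: $W_2^2(\mu_k,\cdot)$ is \emph{not} convex along the ordinary McCann geodesic between $\nu_0$ and $\nu_1$, because Wasserstein space has nonnegative curvature. Reconciling the different anchors is the delicate point, and the route I would follow is the multimarginal reformulation of Agueh and Carlier, in which the barycenter is realised as the pushforward $B_\#\gamma$, with $B(x_1,\dots,x_G):=\sum_k w_k x_k$, of an optimal plan $\gamma\in\Pi(\mu_1,\dots,\mu_G)$ for the cost $\sum_k w_k|x_k-B(x)|^2$. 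Because the marginals are absolutely continuous, Brenier's theorem forces this optimal plan to be induced by transport maps out of a fixed marginal, which both shows that every minimizer is absolutely continuous and pins down $\nu$ uniquely from the unique optimal maps. This is where the absolute-continuity hypothesis of Assumption A\ref{contassumption} is essential, and it is the main obstacle to a short self-contained proof.
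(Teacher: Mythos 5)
First, a point of comparison: the paper does not prove Theorem~\ref{bary} at all --- it is imported verbatim from Agueh and Carlier (2011) and used as a black box --- so there is no internal proof to measure your attempt against. Judged on its own terms, your existence argument is sound and standard: finiteness of the infimum at $\nu=\mu_1$, a uniform second-moment bound along a minimizing sequence via the triangle inequality, tightness by Chebyshev, Prokhorov, and narrow lower semicontinuity of $\nu\mapsto W_2^2(\mu_k,\nu)$ together yield a minimizer by the direct method. That half is complete in outline.

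The uniqueness half has a genuine gap, which you have in fact diagnosed yourself: the strong-convexity estimate along generalized geodesics is anchored at a base measure, the natural anchors $\mu_k$ differ from term to term, and the resulting midpoints do not assemble into a single competitor. Your proposed escape via the multimarginal reformulation is indeed the route of Agueh and Carlier, but you do not carry it out, and the step ``Brenier's theorem forces the optimal multimarginal plan to be induced by maps out of a fixed marginal'' is not a one-line consequence of the two-marginal Theorem~\ref{brenier}; it requires a separate analysis of the multimarginal problem. The standard short fix avoids geodesics altogether: interpolate \emph{linearly}, setting $\nu_t:=(1-t)\nu_0+t\nu_1$ for two putative minimizers $\nu_0\neq\nu_1$. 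Each $W_2^2(\mu_k,\cdot)$ is convex along such mixtures, because the Kantorovich cost is linear in the coupling and the mixture $(1-t)\gamma_0+t\gamma_1$ of optimal plans $\gamma_i\in\Pi(\mu_k,\nu_i)$ is admissible for $(\mu_k,\nu_t)$. If equality held in this convexity inequality for some $k$, that mixture would be an optimal plan for $(\mu_k,\nu_t)$; since $\mu_k$ is absolutely continuous, the optimal plan is unique and concentrated on the graph of a single map $T$ by Theorem~\ref{brenier}, so both $\gamma_0$ and $\gamma_1$ are concentrated on that graph, whence $\nu_0=T_\#\mu_k=\nu_1$, a contradiction. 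Summing against the weights $w_k$ gives strict convexity of the barycenter functional along linear interpolation and hence uniqueness; this is exactly where Assumption~A\ref{contassumption} enters, as you suspected, but the convexity being exploited is that of mixtures of measures, not displacement convexity.
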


The term ``barycenter" is motivated by analogy with the Euclidean case, where the center of mass of points $x_k$ with weights $w_k$ is precisely the least square approximation of these weighted points, i.e.\ the minimiser of the expression $\sum_k w_k|x-x_k|^2$. In the special case of two groups with score distributions $\mu_1, \mu_2$ and weights $w_1, w_2$, respectively, the barycenter is given by the displacement interpolation\footnote{Recall the map $T^\theta$ is defined in Definition~\ref{displacement}.}:
\begin{equation*}
\nu = \mu_1\circ (T^{w_2})^{-1},
\end{equation*} 
where $T$ is the optimal transport map between $\mu_1$ and $\mu_2$ and $T^{w_1}$ is defined as in Definition \ref{displacement}. We will need some elementary {and well-known} properties of the barycenter, {which are consequences of the shift invariance of Wasserstein distance} (we assume from now on that every $\mu_k$ satisfies Assumption A\ref{contassumption}):
\begin{lemma}\label{shift}
Let $\mu$ be a probability measure on $\R^n$, then the translation of $\mu$ by a vector $z\in\R^n$ is defined as 
\begin{equation*}
\nu_z(B)=\nu(B-z)
\end{equation*}
for any measurable $B\subset\R^n$. It holds that
\begin{itemize}
\item[i)] If $\nu$ is the barycenter of measures $\{\mu_k\}$ with weights $\{w_k\}$, $k=1,\ldots, G$, then the barycenter of the translations $\{(\mu_k)_{z_k}\}$ with the same weights is given by $\nu_{\bar{z}}$, where $\bar{z}=\sum_{k=1}^Gw_kz_k$.
\item[ii)] If $T$ is the optimal transport map from $\mu$ to $\nu$, then $T+z$ is the optimal transport map from $\mu$ to $\nu_z$.
\item[iii)] The barycenter of measures with zero expectation has itself zero expectation.
\end{itemize}
\end{lemma}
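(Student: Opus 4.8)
The plan is to reduce all three items to a single computation: how the quadratic cost \eqref{cost} behaves under translations of the marginals. Write $m(\mu):=\int_{\R^n}x\,d\mu(x)$ for the expectation of a finite-variance measure, and record the elementary observation that for \emph{any} transport map $R$ from $\mu$ to $\rho$ one has $\int_{\R^n}R\,d\mu=\int_{\R^n}y\,d\rho=m(\rho)$, so that $\int_{\R^n}(x-R(x))\,d\mu=m(\mu)-m(\rho)$ is a quantity that does not depend on the chosen map $R$. I would first settle ii). Since $\mu\circ S^{-1}=\nu_z$ if and only if $\mu\circ(S-z)^{-1}=\nu$, the assignment $S\mapsto S-z$ is a bijection between transport maps $\mu\to\nu_z$ and transport maps $\mu\to\nu$; expanding the square gives
\begin{equation*}
C(\mu,\nu_z,S)=\int_{\R^n}|x-S(x)|^2\,d\mu=C(\mu,\nu,S-z)-2z\cdot\bigl(m(\mu)-m(\nu)\bigr)+|z|^2,
\end{equation*}
and by the observation the last two terms are a constant independent of $S$. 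Hence $S$ minimises the left-hand cost iff $S-z$ minimises the right-hand one; as $\mu$ is absolutely continuous, Theorem~\ref{brenier} identifies the latter minimiser as the unique map $T$, so $T+z$ is the unique optimal map from $\mu$ to $\nu_z$. The same computation yields the shift formula
\begin{equation}\label{shiftformula}
W_2^2(\mu,\nu_z)=W_2^2(\mu,\nu)-2z\cdot\bigl(m(\mu)-m(\nu)\bigr)+|z|^2,
\end{equation}
and, using the symmetry of $W_2$ and a second application, its two-sided version $W_2^2(\mu_a,\nu_b)=W_2^2(\mu,\nu)+2(a-b)\cdot(m(\mu)-m(\nu))+|a-b|^2$.

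For i), I would first note that each $(\mu_k)_{z_k}$ again satisfies Assumption~A\ref{contassumption} (translation preserves absolute continuity and finite variance), so by Theorem~\ref{bary} the translated family has a unique barycenter. Parametrising an arbitrary competitor as $\rho=\sigma_{\bar{z}}$ (a bijection as $\sigma$ ranges over finite-variance probability measures) and applying the two-sided formula with $a=z_k$, $b=\bar{z}$ gives
\begin{equation*}
\sum_{k=1}^G w_k W_2^2\bigl((\mu_k)_{z_k},\sigma_{\bar{z}}\bigr)=\sum_{k=1}^G w_k W_2^2(\mu_k,\sigma)+2\sum_{k=1}^G w_k(z_k-\bar{z})\cdot\bigl(m(\mu_k)-m(\sigma)\bigr)+\sum_{k=1}^G w_k|z_k-\bar{z}|^2.
\end{equation*}
The defining identity $\sum_{k=1}^G w_k(z_k-\bar{z})=0$ kills the $m(\sigma)$-dependence in the middle sum, so the middle and final sums are constants independent of $\sigma$. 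Thus $\rho$ minimises the barycenter functional of the translated family exactly when $\sigma$ minimises that of the original family; by the uniqueness in Theorem~\ref{bary} the latter minimiser is $\nu$, whence the former is $\nu_{\bar{z}}$.

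For iii), suppose $m(\mu_k)=0$ for all $k$ and let $\nu$ be the barycenter. Comparing $\nu$ with its recentring $\nu_{-m(\nu)}$ via \eqref{shiftformula} yields $W_2^2(\mu_k,\nu_{-m(\nu)})=W_2^2(\mu_k,\nu)-|m(\nu)|^2$ for each $k$, and therefore
\begin{equation*}
\sum_{k=1}^G w_k W_2^2\bigl(\mu_k,\nu_{-m(\nu)}\bigr)=\sum_{k=1}^G w_k W_2^2(\mu_k,\nu)-|m(\nu)|^2.
\end{equation*}
If $m(\nu)\neq0$ the right-hand side is strictly smaller, contradicting the minimality of $\nu$; hence $m(\nu)=0$.

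The routine bookkeeping (translations preserve Assumption~A\ref{contassumption}, and the pushforward identities for $\mu\circ S^{-1}$) is immediate, and the uniqueness statements are supplied by Theorems~\ref{brenier} and~\ref{bary}. The one genuinely load-bearing point, where I expect the argument to stand or fall, is the shift step: it hinges on the fact that the linear correction $\int_{\R^n}(x-R(x))\,d\mu=m(\mu)-m(\rho)$ is \emph{the same for every} transport map, since this is precisely what turns the cost expansion into a constant shift and thereby lets optimality and minimality be transported between the original and translated problems.
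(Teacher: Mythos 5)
Your proposal is correct and follows essentially the same route as the paper: expand the quadratic cost under translation and exploit the fact that the linear term $\int(x-R(x))\,d\mu=m(\mu)-m(\rho)$ is the same for every transport map, so that costs (and hence the barycenter functional) shift by constants and minimizers correspond. Your direct ``cost-preserving bijection'' phrasing is a tidier organization than the paper's argument by contradiction, and it makes explicit the cancellation $\sum_k w_k(z_k-\bar{z})=0$ that the paper uses only implicitly in equating the cross terms of \eqref{shift2} and \eqref{shift3}, but the underlying computation is identical.
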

\begin{proof}
We denote the expectation of a probability measure by $\operatorname{E}$.
i) Suppose this were not the case, then there would exist another probability measure $\lambda$ on $\R^n$ such that 
\begin{equation*}
 \sum_{k=1}^Gw_kW^2_2((\mu_k)_{z_k},\lambda)< \sum_{k=1}^Gw_kW^2_2((\mu_k)_{z_k},\nu_{\bar{z}}).
\end{equation*}
Let $T_k$ be the optimal transport map from $\mu_k$ to $\nu$ (whose existence is guaranteed by Proposition 3.8 in Agueh and Carlier 2011), then the map
\begin{equation*}
T_k^*: x\mapsto T_k(x-z_k)+\bar{z}
\end{equation*}
is a transport map from $(\mu_k)_{z_k}$ to $\nu_{\bar{z}}$, and therefore
\begin{equation}\label{shift1}
 \sum_{k=1}^Gw_kW^2_2((\mu_k)_{z_k},\lambda)< \sum_{k=1}^Gw_k\int_{\R^n}|T_k^*(x)-x|^2d(\mu_k)_{z_k}.
\end{equation}
We compute
\begin{equation}\label{shift2}
\begin{aligned}
\sum_{k=1}^Gw_k\int_{\R^n}|T_k^*(x)-x|^2d(\mu_k)_{z_k}&=\sum_{k=1}^Gw_k\int_{\R^n}|T_k(x-z_k)-(x-z_k)+\bar{z}-z_k|^2d(\mu_k)_{z_k}\\
&=\sum_{k=1}^Gw_k\int_{\R^n}|T_k(x)-x+\bar{z}-z_k|^2d\mu_k\\
=\sum_{k=1}^Gw_k\int_{\R^n}|T_k(x)-x|^2d\mu_k&+\sum_{k=1}^Gw_k|\bar{z}-z_k|^2+\sum_{k=1}^Gw_k(\bar{z}-z_k)\cdot(\operatorname{E}[\nu]-\operatorname{E}[\mu_k]).\\
\end{aligned}
\end{equation}
On the other hand, let $S_k^*$ be the optimal transport map from $(\mu_k)_{z_k}$ to $\lambda$. As translations are invertible, it is possible to write 
\begin{equation*}
S_k^*: x\mapsto S_k(x-z_k)+\bar{z}
\end{equation*}
for some transport map $S_k$ from $\mu_k$ to $\lambda_{-\bar{z}}$. Then an analogous computation yields
\begin{equation}\label{shift3}
\begin{aligned}
\sum_{k=1}^Gw_k\int_{\R^n}|S_k^*(x)-x|^2d(\mu_k)_{z_k}&\\
=\sum_{k=1}^Gw_k\int_{\R^n}|S_k(x)-x|^2d\mu_k&
+\sum_{k=1}^Gw_k|\bar{z}-z_k|^2+\sum_{k=1}^Gw_k(\bar{z}-z_k)\cdot(\operatorname{E}[\nu]-\operatorname{E}[\mu_k]).\\
\end{aligned}
\end{equation}
It follows then from~\eqref{shift1}, \eqref{shift2}, and \eqref{shift3} that 
\begin{equation*}
 \sum_{k=1}^Gw_kW^2_2((\mu_k),\lambda_{-\bar{z}})< \sum_{k=1}^Gw_kW^2_2(\mu_k,\nu),
\end{equation*} 
in contradiction to $\nu$ being the barycenter.

ii) Suppose not, then there would be a transport map $S^*$ from $\mu$ to $\nu_{z}$ such that 
\begin{equation*}
\int_{\R}^n|S^*(x)-x|^2d\mu(x)<\int_{\R}^n|T(x)+z-x|^2d\mu(x).
\end{equation*}
As before, we may write $S^*(x)=S(x)+z$ for some transport map $S$ from $\mu$ to $\nu$, and then it holds that
\begin{equation*}
\int_{\R}^n|S^*(x)-x|^2d\mu(x)=\int_{\R}^n|S(x)-x|^2d\mu(x)+|z|^2+z\cdot(\operatorname{E}[\nu]-\operatorname{E}[\mu])
\end{equation*}
and analogously
\begin{equation*}
\int_{\R}^n|T(x)+z-x|^2d\mu(x)=\int_{\R}^n|T(x)-x|^2d\mu(x)+|z|^2+z\cdot(\operatorname{E}[\nu]-\operatorname{E}[\mu]).
\end{equation*}
It follows that 
\begin{equation*}
\int_{\R}^n|S(x)-x|^2d\mu(x)<\int_{\R}^n|T(x)-x|^2d\mu(x),
\end{equation*}
in contradiction to the the optimality of $T$.

iii) By ii), we have for any $z\in\R^n$ and any probability measure $\nu$ that
\begin{equation*}
\sum_{k=1}^GW_2^2(\mu_k,\nu_z)=\sum_{k=1}^GW_2^2(\mu_k,\nu)+|z|^2+2z\cdot\operatorname{E}[\nu],
\end{equation*}
where we used $\operatorname{E}[\mu_k]=0$. Choosing $z=-\operatorname{E}[\nu]$, we discover
\begin{equation*}
\sum_{k=1}^GW_2^2(\mu_k,\nu_{-\operatorname{E}[\nu]})=\sum_{k=1}^GW_2^2(\mu_k,\nu)-|\operatorname{E}[\nu]|^2,
\end{equation*}
so that any minimizer must indeed have the property $\operatorname{E}[\nu]=0$.
\end{proof}

We are finally ready to define our continuous fairness algorithm and prove its optimality.

\subsection*{Continuous Fairness Algorithm (CFA$\theta$)} Given $\mu_k$ satisfying Assumption A\ref{contassumption} and $w_k$ the corresponding weights, let $\nu$ be the barycentre. Then choose $\nu^k$ to be the displacement interpolation $\mu_k^\theta$ between $\mu_k$ and $\nu$, and $T_k$ to be the optimal transport map between $\mu_k$ and $\nu_k$ (cf.\ Theorem~\ref{brenier}). \\

For $\theta=0$ and $\theta=1$, we reobtain the algorithms described above for WYSIWYG and WAE, respectively. As a generalization, we may also pick different values of $\theta$ for different $k$\footnote{ This may be significant in order to handle intersectionality, as explained in the introduction.}. In this case, we would choose $\nu^k$ as the displacement interpolation $\mu_k^{\theta_k}$ between $\mu_k$ and $\nu$.

The question remains how the algorithm (CFA$\theta$) performs with regard to our fairness requirements (\ref{individual})--(\ref{utility}). Before we answer these questions, we have to formulate these requirements in a more precise way.

Consider the fairness criteria in order:
\begin{enumerate}
\item As discussed above, individual fairness will typically be in conflict with group fairness. As a consequence, it only makes sense to consider individual fairness \emph{within groups} (see also Dwork et al. 2012, p. 220).  Individual fairness is usually formulated in terms of continuity or Lipschitz continuity (Dwork et al. 2012; Friedler et al. 2016). In our notation, this would mean that there exists a Lipschitz constant $L>0$ such that
\begin{equation}\label{lip}
|T_k(x)-T_k(y)|\leq L|x-y|\quad\text{for all $x,y\in[0,1]$ and $k=1,\ldots,G$.}
\end{equation} 
However, this will be impossible to satisfy in general,\footnote{ The main obstruction to continuity of transport maps is the possible non-connectedness of the support of the target measure (cf.\ Theorem 1.27 in Ambrosio and Gigli 2013). This is the case if, within one group, there are further subgroups with very different score statistics.} although under certain assumptions, like the convexity of the support of the target measure, it can be guaranteed (Theorem 1.27 in Ambrosio and Gigli 2013). Instead, we aim to characterize individual fairness \emph{in a least square sense}: To this end, we consider the quantity
\begin{equation}\label{error}
\sum_{k=1}^Gw_k\int_{\R^n}\int_{\R^n}\frac12|T_k(x)-T_k(y)|^2d\mu_k(x)d\mu_k(y).
\end{equation}
Clearly, it is non-negative, and is zero if and only if $T_k$ takes a constant value $\mu_k$-almost everywhere, for all $k=1,\ldots, G$. This last observation implies that it is unreasonable to minimize this quantity, as the minimizing configuration would assign the same fair score to every individual. Not only would such a classification be completely useless for the decision maker, but it is also questionable if it would be ``fair" in any meaningful sense of the word. Indeed, being treated similarly as a much less qualified member of one's group is hardly any fairer than being treated completely differently than a similarly qualified one. This indicates that in~\eqref{lip}, the Lipschitz modulus $L$ should not be \emph{minimal}, but \emph{as close as possible to one}. In our least-square framework, therefore, instead of minimizing~\eqref{error}, we should rather minimize
\begin{equation}\label{indfair}
E_{ind}:=\sum_{k=1}^Gw_k\int_{\R^n}\int_{\R^n}\frac12\left|(T_k(x)-T_k(y))-(x-y)\right|^2d\mu_k(x)d\mu_k(y),
\end{equation}
where we call $E_{ind}$ the \emph{individual fairness error}.

Note that this definition of individual fairness, in contrast to Dwork et al. (2012) and other papers in this vein, does not presuppose a similarity metric. This seems justified for two reasons. First, the establishment of a similarity metric, providing an ``objective'' assessment of the candidates with respect to the desired trait (ground truth), has been recognized as extraordinarily difficult and, in fact, as the main shortcoming Dwork et al.'s framework in the literature (Zemel et al., 2013; Fish et al., 2016, p. 146; Friedler et al., 2016, p. 3; Gillen et al., 2018, p. 1 et seq.; Chouldechova and Roth, 2018, p. 4), but also in Dwork et al. (2012), p. 214 et seq.\ itself. Indeed, it seems unclear how such a metric should be established in a way that is simultaneously less prone to bias than the raw score and scalable to hundreds or thousands of applicants. Quite tellingly, one example of a similarity metric Dwork et al. (2012), p. 215, give is one based on credit worthiness. However, credit worthiness scores (e.g., FICO scores) are precisely a type of raw scores we would work with, too. Therefore, in practice, the difference between raw scores and a similarity metric disappears given real-world limitations on the establishment of the latter (see also Joseph et al., 2016).

Second, if one had access to an objective similarity metric, the whole scoring procedure would be quite superfluous as one could use this metric directly to make selection decisions. Hence, we adopt a realist perspective in positing that access to such a metric will generally be impossible, and that we will have to content ourselves, for purely pragmatic reasons, with the raw scores arising from the (machine learning) regression procedure. This seems sensible as long as these raw scores are somewhat close to ground truth, even though they may be tainted with bias. If, however, the raw scores are so disjunct from reality (e.g., because of bias) that they provide no meaningful information to guide selection procedures, then rather than salvaging the model through algorithmic fairness, one should discard it altogether. In sum, our mapping exercise takes the raw scores as a realistically accessible baseline and therefore dispenses of the need for a similarity metric. Notwithstanding, our model can obviously be paired with fairness procedures seeking to ensure the establishment of rather objective raw scores (see, e.g., Zemel et al., 2013). Note, however, that even ``objective'', ``factually correct'' raw scores may lead to discrimination (in the sense of violation of group fairness) if the desired trait is unevenly distributed across protected groups, and that this may still necessitate fairness interventions of the kind we envisage (cf. Gilles et al., 2018, p. 1).

\item The requirement of statistical parity imposes that the fair representations conceal any information on group membership. This means that the events ``an individual belongs to group $k$" and ``an individual has score $s$" should be stochastically independent. More precisely, we demand for each $k\in 1,\ldots,G$ and every measurable $B\subset\R^n$
\begin{equation}\label{indep}
w_k\nu_k(B)=w_k\sum_{l=1}^Gw_l\nu_l(B).
\end{equation} 
Let now $v_1,\ldots,v_G$ be any other set of positive numbers with $\sum_{k=1}^Gv_k=1$. Then multiplying~\eqref{indep} with $v_k$ and summing over $k$ yields
\begin{equation*}
\sum_{k=1}^G(v_k-w_k)\nu_k=0.
\end{equation*}
For any measurable $B\subset\R^n$, consider the vector $\bar{\nu}:=(\nu_1(B),\ldots,\nu_G(B))$. Since the choice of the $v_k$ was arbitrary, the vector $(v_1-w_1,\ldots,v_G-w_G)$ can run through a relatively open subset of the linear subspace $\{x:\sum_{k=1}^Gx_k=0\}\subset \R^G$, and hence $\bar{\nu}$ is orthogonal to this subspace. But the orthogonal complement of this subspace is the span of the vector $(1,\ldots,1)$, so we conclude that all components of $\bar{\nu}$ are equal. Since $B$ was chosen arbitrarily, it follows that all the measures $\nu_k$, $k=1,\ldots,G$, are identical.

\item Again, the monotonicity requirement can only be fulfilled within groups. As $\R^n$  admits no linear ordering for $n\geq2$, we restrict ourselves to the case $n=1$. Monotonicity can be formulated as 
\begin{equation*}
T_k(x)\leq T_k(y)\quad\text{whenever $x\leq y$},
\end{equation*}
for every $k=1,\ldots, G$.
\item Finally, we assume that the decision maker utility is given as
\begin{equation*}
U=-\frac12\sum_{k=1}^Gw_k\int_{\R^n}|T_k(x)-x|^2dx,
\end{equation*}
where the negative sign reflects that the utility is higher if the squared difference of the raw score and the fair score is smaller. Note that decision maker utility is maximized when $T_k=id$, i.e.\ when the raw score and the fair score are identical. 
\end{enumerate}

Recall the situation $\theta=1$, which reflects the WAE worldview. Then all raw distributions are mapped to the same fair distribution $\nu$. We have argued in (2) above that this is necessary (and sufficient) for perfect group fairness in the sense of statistical partity. We have the following optimality result:
\begin{theorem}\label{optimalitythm}
Let $\theta=1$. Among all possible choices of the target map $\nu$ and of transport maps $T_k$ from $\mu_k$ to $\nu$, the ones specified in CFA$\theta$ maximize decision maker utility, minimize the individual fairness functional \eqref{indfair}, and, in the case $n=1$, ensure monotonicity within groups. 
\end{theorem}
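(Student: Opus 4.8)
The plan is to decouple the optimization over the target measure $\nu$ from the optimization over the transport maps $T_k$, and to handle the three assertions (utility, individual fairness, monotonicity) through one structural identity. The crux is to expand the individual fairness functional~\eqref{indfair}. Writing $u_k(x):=T_k(x)-x$ and using $(T_k(x)-T_k(y))-(x-y)=u_k(x)-u_k(y)$, I would expand the square and integrate against the product measure; since each $\mu_k$ is a probability measure, integrating out one variable yields
\begin{equation*}
\int_{\R^n}\int_{\R^n}\tfrac12|u_k(x)-u_k(y)|^2\,d\mu_k(x)\,d\mu_k(y)=\int_{\R^n}|u_k|^2\,d\mu_k-\Bigl|\int_{\R^n}u_k\,d\mu_k\Bigr|^2.
\end{equation*}
Because $T_k$ pushes $\mu_k$ forward to $\nu$, the mean $\int u_k\,d\mu_k=\operatorname{E}[\nu]-\operatorname{E}[\mu_k]$ depends only on $\nu$, not on the chosen transport map. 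Hence
\begin{equation*}
E_{ind}=\sum_{k=1}^G w_k\,C(\mu_k,\nu,T_k)-\sum_{k=1}^G w_k\,|\operatorname{E}[\nu]-\operatorname{E}[\mu_k]|^2,
\end{equation*}
where $C$ is the cost functional~\eqref{cost} and the first sum coincides with $-2U$, i.e. maximizing decision-maker utility is equivalent to minimizing the total transport cost.

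With this identity I would first fix the target $\nu$. The second sum above is independent of the transport maps, so minimizing $E_{ind}$ over the $T_k$ is the same problem as minimizing the total cost $\sum_k w_k\,C(\mu_k,\nu,T_k)$, which is exactly what maximizing $U$ demands. By Theorem~\ref{brenier}, and since each $\mu_k$ is absolutely continuous (Assumption~A\ref{contassumption}), the unique minimizer for each $k$ is the optimal transport map from $\mu_k$ to $\nu$, i.e. precisely the $T_k$ prescribed by CFA$\theta$ (at $\theta=1$ the interpolation target $\mu_k^1$ equals $\nu$). This settles, for every fixed $\nu$, the utility claim and the transport-map part of the individual-fairness claim simultaneously. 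For monotonicity when $n=1$, I would simply invoke the cyclic monotonicity of Theorem~\ref{brenier}, which in one dimension reads $(x-y)(T_k(x)-T_k(y))\ge 0$; this is exactly $T_k(x)\le T_k(y)$ whenever $x\le y$ within each group.

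The remaining and most delicate step is to optimize over $\nu$ and to check that the barycenter is the common minimizer of both functionals. For the utility this is immediate: with the $T_k$ optimal, the total cost is $\sum_k w_k W_2^2(\mu_k,\nu)$ by~\eqref{wasser}, whose unique minimizer is the barycenter by Theorem~\ref{bary}. The genuine difficulty is $E_{ind}$, since the subtracted term $\sum_k w_k|\operatorname{E}[\nu]-\operatorname{E}[\mu_k]|^2$ could a priori pull the minimizer away from the barycenter. My plan is to show that
\begin{equation*}
\Phi(\nu):=\sum_{k=1}^G w_k W_2^2(\mu_k,\nu)-\sum_{k=1}^G w_k|\operatorname{E}[\nu]-\operatorname{E}[\mu_k]|^2
\end{equation*}
is invariant under translations of $\nu$. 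Completing the square rewrites the subtracted term as $|\operatorname{E}[\nu]-\bar m|^2+V$ with $\bar m:=\sum_k w_k\operatorname{E}[\mu_k]$ and the constant $V:=\sum_k w_k|\operatorname{E}[\mu_k]|^2-|\bar m|^2$; meanwhile Lemma~\ref{shift}(ii) gives the translation law $\sum_k w_k W_2^2(\mu_k,\nu_z)=\sum_k w_k W_2^2(\mu_k,\nu)+|z|^2+2z\cdot(\operatorname{E}[\nu]-\bar m)$. Substituting $\nu_z$ shows the $z$-dependent contributions cancel, so $\Phi(\nu_z)=\Phi(\nu)$. The infimum of $\Phi$ is therefore attained on the slice $\{\operatorname{E}[\nu]=\bar m\}$, where the subtracted term reduces to the constant $V$ and minimizing $\Phi$ collapses to minimizing $\sum_k w_k W_2^2(\mu_k,\nu)$; by Theorem~\ref{bary} the minimizer is the barycenter, and by Lemma~\ref{shift}(i),(iii) the barycenter has mean $\bar m$, hence lies in this slice. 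Thus the barycenter minimizes $E_{ind}$ as well, completing the proof. I expect this translation-invariance reduction to be the main obstacle, as it is the only point where the competition between the transport cost and the mean-correction term must be resolved, and it relies essentially on the shift properties of Lemma~\ref{shift}.
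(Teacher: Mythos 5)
Your proposal is correct and follows essentially the same route as the paper: the same expansion of $E_{ind}$ into the transport cost minus the mean-discrepancy term $\sum_k w_k|\operatorname{E}[\nu]-\operatorname{E}[\mu_k]|^2$, the same appeal to Brenier's theorem for the optimal maps and monotonicity, and the same use of the shift properties of Lemma~\ref{shift} to neutralize the mean term before invoking the barycenter. The only cosmetic difference is that you keep the $\mu_k$ fixed and prove translation invariance of the resulting functional of $\nu$ explicitly, whereas the paper normalizes the $\mu_k$ to zero expectation and restricts the competition to zero-mean targets; your bookkeeping of the factor of $2$ in the translation law is, if anything, slightly more careful than the paper's.
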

\begin{proof}
By definition, the barycenter $\nu$ of the $\mu_k$ with weights $w_k$ minimizes the functional 
\begin{equation*}
\nu\mapsto\sum_{k=1}^Gw_kW^2_2(\mu_k,\nu)
\end{equation*}
among all probability measures on $\R^n$ (cf.\ Theorem \ref{bary}). On the other hand, by \eqref{cost} and \eqref{wasser}, this functional is equal to the negative utility
\begin{equation*}
\sum_{k=1}^Gw_k\int_{\R^n}|T_k(x)-x|^2d\mu_k(x)
\end{equation*}
if and only if $T_k$ is the unique optimal transportation map from $\mu_k$ to $\nu$. This already establishes maximal utility for our choice of $\nu$, $T_k$. Note also that this choice is the unique one with this property. 

By Brenier's Theorem \ref{brenier}, the optimal transport maps $T_k$ are cyclically monotone, which implies monotonicity in the one-dimensional setting. So it remains to show that \eqref{indfair} is also minimized by our choices of $\nu$, $T_k$. Notice first that $E_{ind}$ is invariant under translations of the measures $\mu_k$ and $\nu$. Thus, given $\{\mu_k\}$ and $\{w_k\}$ for $k=1,\ldots,G$, the minimal value of $E_{ind}$ is attained replacing $\mu_k$ by $\bar{\mu}_k$, where $\bar{\mu}_k$ is the translated version of $\mu_k$ with zero expectation, and minimizing only with among measures $\bar{\nu}$ with zero expectation.

Denoting by E the expectation of a measure, we then compute for the individual fairness error of $\bar{\mu}_k$, $w_k$, $\bar{\nu}$, and $\bar{T}_k$ any transport maps from $\bar{\mu}_k$ to $\bar{\nu}$:

\begin{equation*}
\begin{aligned}
E_{ind}&=\sum_{k=1}^Gw_k\int_{\R^n}\int_{\R^n}\frac12\left|(\bar{T}_k(x)-\bar{T}_k(y))-(x-y)\right|^2d\bar{\mu}_k(x)d\bar{\mu}_k(y)\\
&=\sum_{k=1}^Gw_k\left[\int_{\R^n}|\bar{T}_k(x)-x|^2d\mu_k(x)-\int_{\R^n}\int_{\R^n}(\bar{T}_k(x)-x)\cdot(\bar{T}_k(y)-y)d\bar{\mu}_k(x)d\bar{\mu}_k(y)\right]\\
&=-2U-\sum_{k=1}^Gw_k|\operatorname{E}[\bar{\nu}]-\operatorname{E}[\bar{\mu}_k]|^2=-2U.\\
\end{aligned}
\end{equation*} 
But we have already seen that the negative utility is minimized by choosing $\bar{\nu}$ as the barycenter of the $\bar{\mu}_k$ with weights $w_k$, and $\bar{T}_k$ the optimal transport maps from $\bar{\mu}_k$ to $\bar{\nu}$. This choice is consistent with our requirement that $\operatorname{E}[\bar{\nu}]=0$, since the barycenter of measures with expectation zero has itself expectation zero by virtue of Lemma~\ref{shift}iii).

Translating back again and recalling the invariance of $E_{ind}$ under such translations, we see that the translated measure $\nu:=\bar{\nu}_{\bar{z}}$, where $\bar{z}:=\sum_{k=1}^Gw_k\operatorname{E}[\mu_k]$, and the translated transport maps $T_k:=\bar{T}_k+\operatorname{E}[\mu_k]$ minimizes $E_{ind}$ for our original measures $\mu_k$ among all possible choices of $\nu$ and $T_k$. The statement of the Theorem then follows by observing that $\nu$ thus chosen is precisely the barycenter of the $\mu_k$ with weights $w_k$ by Lemma~\ref{shift}i), and the $T_k$ are the optimal transport maps from $\mu_k$ to $\nu$ by Lemma~\ref{shift}ii).

\end{proof}

\section{Examples}

In this section, we consider three examples that highlight different features of our fairness framework: college admissions; credit decisions; and insurance contracts.

\subsection{College Admissions}

We start by looking at university/college admissions decisions, a common example in the literature on algorithmic fairness (see Friedler et al.\ 2016). As noted, our framework performs particularly well when groups are large; hence, it is less suitable for small colleges, but of relevance, for example, in centralized national or other large admissions procedures. As explained in greater detail in the model, our approach consists in transforming a raw score that individuals receive into a fair score that fulfills certain fairness requirements while simultaneously optimizing the utility of the decision maker under the fairness constraints. The raw score that applicants receive will, in the case of colleges admissions, be some score that aims to predict college performance based on a number of input data, such as admissions tests; high school GPA; previous job or educational experience etc. If we care about group fairness, we will be interested in how the individual raw scores are distributed in different groups, and to what extent our Continuous Fairness Algorithm CFA$\theta$ will remedy inequality between these distributions.

For the sake of analytical clarity, let us consider a drastically simplified example. We assume that there are only two applicant groups of interest (for example, two different ethnic or gender groups), each consisting of an equal, large number of members. The college will admit the top-ranked 50 percent of applicants. Individuals receive raw scores that run continuously from 0 (worst) to 1 (best). Raw scores in Group A are distributed according to a compactly supported probability density function on the interval from 0 to 0.5 (e.g.\ a normal distribution centered at 0.25, with tails cut off at 0 and 0.5); similarly, raw scores in Group B are distributed according to a density function of the same shape from 0.5 to 1, see Figure 1. Thus, even the worst member of Group B scores better than the best member of Group A. In this extreme case, if the college was to base its admissions decision on the raw scores, it would accept all members of Group B and no members of Group A. While maintaining individual fairness within groups and also between members of different groups, this outcome violates group fairness to an extreme extent. 

\begin{figure}
\begin{center}
\caption{Distribution of raw scores for two groups}
\vspace{0.4cm}
 \begin{tikzpicture}[domain=0:8] 
    \draw[->] (-0.2,0) -- (8.2,0) node[right] {score}; 
    \draw[->] (0,-1.2) -- (0,4.2) node[above right] {raw score distribution};
    \draw[color=blue] (0,0) to[out=0,in=180] (2,4);
    \draw[color=blue] (2,4) to[out=0,in=180] (4,0);
\draw(2,2) node[color=blue]{A};
  \draw[color=red] (4,0) to[out=0,in=180] (6,4);
    \draw[color=red] (6,4) to[out=0,in=180] (8,0);
\draw(6,2) node[color=red]{B};
\draw (4,-0.5) node{0.5};
\draw (8,-0.5) node{1};
  \end{tikzpicture}
	\end{center}
\end{figure}

To apply our fairness framework, we first have to find the intermediate distribution that constitutes the barycenter. This is the distribution that minimizes the sum of the squares of the distances to both raw group distributions (in the Wasserstein metric). In our case, it is easy to see that it is the distribution of the same shape as the raw distributions which runs from 0.25 to 0.75 and which peaks at a score of 0.50. Note, however, that the characterization of the barycenter will not be as easy in more general circumstances, when the two (or more) distributions are no longer merely translates of each other.

\begin{figure}
\begin{center}
\caption{The barycenter distribution}
\vspace{0.4cm}
 \begin{tikzpicture}[domain=0:8] 
    \draw[->] (-0.2,0) -- (8.2,0) node[right] {score}; 
    \draw[->] (0,-1.2) -- (0,4.2) ;
     \draw[color=violet] (2,0) to[out=0,in=180] (4,4);
    \draw[color=violet] (4,4) to[out=0,in=180] (6,0);
\draw(4,2) node[color=violet]{barycenter};
\draw (4,-0.5) node{0.5};
\draw (8,-0.5) node{1};
  \end{tikzpicture}
	\end{center}
\end{figure}
By choosing $\theta$, we can now vary to what extent we would like to transport the raw distributions of each group toward this intermediate distribution. If we choose $\theta$ = 1, our CFA$\theta$ maps the raw distributions of both groups completely onto the intermediate distribution (cf.\ Figure 2). If we present this novel distribution to the college, and the college applies its decision rule of accepting the top-ranking 50 percent, it will accept all those with fair scores of 0.5 or more.\footnote{ If the group size is odd, the college will have to randomize its admission decision for the lowest ranking pair of individuals at or above score 0.5. Of course this issue is not visible in our continuum framework.} In this case, group fairness is fully safeguarded as it is impossible to deduce group membership from the fair score: the fair distribution is exactly the same for both groups. This is equivalent with the observation that an equal percentage of applicants is expected from each group (statistical parity). It can be understood as a reconstruction of the equal distribution of scores among groups in the construct space (WAE) as defined by Friedler et al. (2016). Within groups, our use of an optimal transport map from the raw to the fair distribution guarantees monotonicity: all those that received a higher raw score than other group members receive a higher, or equal, fair score. Furthermore, the optimality constraint on the transport ensures that decision maker utility is maximized: everyone gets a fair score that is as close as possible, in the least square sense, to their raw score. However, monotonicity is violated between groups: members of the lower-ranking half of Group B each had a higher raw score than the members of the upper-ranking half of Group A, but the ranking is inverted for the fair scores. By implication, individual fairness between members of different groups is violated; for example, the top-ranking member of Group A had a similar raw score to the lowest-ranking member of Group B. However, they are now at opposite ends of the spectrum of fair scores. This is exactly the price one has to pay for group fairness, and the reason why affirmative action continues to be so contentious (see also the discussion in the subsection on legal and policy implications).

This does not exhaust the possibilities our algorithm offers, however. Rather, it admits for any degree of approximation between the raw scores, ranging from zero approximation ($\theta$ = 0: fair scores = raw scores) to the full approximation just discussed ($\theta$ = 1: fair scores of both groups distributed along the barycenter). For example, if $\theta$ = 1/2, each of the distributions is transported ``halfway" toward the barycentre distribution, as shown in Figure 3. 

\begin{figure}
\centering
\caption{Fair representations for $\theta=\frac12$}
\vspace{0.4cm}
 \begin{tikzpicture}[domain=0:8] 
    \draw[->] (-0.2,0) -- (8.2,0) node[right] {fair score}; 
    \draw[->] (0,-1.2) -- (0,4.2) node[above right] {density};
    \draw[color=blue] (1,0) to[out=0,in=180] (3,4);
    \draw[color=blue] (3,4) to[out=0,in=180] (5,0);
\draw(3,2) node[color=blue]{A};
  \draw[color=red] (3,0) to[out=0,in=180] (5,4);
    \draw[color=red] (5,4) to[out=0,in=180] (7,0);
\draw(5,2) node[color=red]{B};
\draw (4,-0.5) node{0.5};
\draw (8,-0.5) node{1};
  \end{tikzpicture}
\end{figure}

As noted in the introduction, $\theta$ can also be chosen so that the ``80 percent rule" is fulfilled. This would mean that the probability of a member of Group A of being admitted is at least 80 percent of the respective probability of a member of Group B. Such a result can also be achieved by imposing strict quotas. Choosing a $\theta$ value instead of a certain quota, however, affords the advantage that it only imposes a certain degree of approximation of the raw distributions; it does not impose a fixed number of applicants admitted from either group. This is not only an advantage from a legal perspective, see Section~\ref{sec:legal}. It also implies that if in one year Group B has particularly strong candidates (but still worse than A), they are not penalized. In every case, the college must base its admissions decision on the fair distribution that results from the choice of a specific $\theta$. As Section~\ref{sec:legal} will highlight in greater detail, it is thus crucial to choose $\theta$ wisely. 

We note in passing that the preceding discussion easily carries over to private companies or public agencies making hiring decisions; this is another potentially highly relevant field for algorithmic fairness since cases of algorithmic bias have already been documented in this realm, see, e.g. Lowry and Macpherson (1988); Reuters (2018). 

\subsection{Credit Applications}

Our second major example concerns a decision maker such as a bank that decides on applications for credit. As noted in the introduction, this is an area where algorithmic decision making has already become firmly rooted, with algorithmically determined credit rating scores such as the FICO score in the US and the SCHUFA score in Germany being in wide use. A number of Fintech startups are basing their loan decisions entirely on algorithmic models (Hurley and Adebayo 2016).\footnote{ See, e.g., https://www.kreditech.com/.} Scholars, however, have criticized the credit rating system as opaque and biased (see Pasquale 2015; Rothmann et al. 2014; Hurley and Adebayo 2016). Let us therefore consider an example in which individual credit applicants are assigned a credit rating raw score based on a number of factors such as their financial history, available collateral etc. The decision rule for the institution consists in establishing a cut-off threshold at a certain score below which loan applications will be rejected. Above this score, applicants receive loan offers. The better their scores, the better the loan conditions they are offered.

Again, the CFA$\theta$ forces the institution to base its decision on a modified, fair score instead of the raw score. Just like in the previous example, the choice of a concrete $\theta$ is paramount to determine to what extent different distributions in the various applicant groups should be approximated to the barycenter distribution. Importantly, let us assume that we have two relevant, binary categories (for example, ethnicity: black and white; gender: male and female) that create four different groups. If, for example, there is evidence of particular raw score bias against blacks and against male applicants separately, but moreover of even greater bias against black male applicants,\footnote{ This pattern is basically found, for example, in the empirical study by Ayres and Siegelman (1995), p. 309 et seq., for offers made by car dealers to members of the respective groups. In other settings, of course, women may be the group discriminated against.} we can choose a higher $\theta$ for the transformation of the scores belonging to the latter group. This pushes these particularly biased raw scores closer to the intermediate distribution than the less biased raw scores of the black and white female, and white male, applicants. Hence, our framework accommodates different degrees of bias arising from intersectionality.

\subsection{Insurance Contracts}

A final example stems from insurance contracts. Among standard consumer contracts, insurance contracts are probably the ones that are based on the most refined statistical models. If they were not, insurance companies would be out of business fairly soon. Increasingly, insurance companies are offering personalized insurance contracts whose conditions, including premiums, are determined by a complex set of factors (see, e.g., Guan 2016). Risk scores are calculated for a large number of individuals (see Chen et al. 2017, p. 201) so that our framework generally applies. Let us therefore imagine that an insurance company seeks to distill a risk score from the input data. In the case of car insurance, this may include accident history; driving style data; car model; age group etc. 

Again, there may be significant differences in the distribution of the raw risk scores among relevant groups. What makes the case of insurance so special, at least in the EU, is the fact that the Court of Justice of the European Union, in its landmark Test-Achats decision in 2011, ruled that gender may not be a determining factor in insurance pricing.\footnote{ CJEU case C-236/09 \emph{Test-Achat} ECLI:EU:C:2011:100, para. 28--34; on this, see Tobler (2011).} While we cannot offer a comprehensive analysis of the legal implications of this judgment (see Hacker 2018, particularly p. 1166 et seq. for more detail), it is clear that, for example in the case of car insurance, insurers may comply with the ruling by not offering women cheaper premiums than men despite statistical evidence that the former are safer drivers (see Reed et al. 2016, p. 3). Therefore, insurance companies should have an intrinsic interest in transforming raw risk scores such that gender differences are mitigated or erased from the data. Our algorithm provides for a straightforward way of achieving this: we only have to take the distributions for the male and the female group, respectively; calculate the barycenter; and fully transport the scores onto that distribution by setting $\theta = 1$. Depending on how much leeway the law leaves to insurance companies,\footnote{ See, e.g., the guidelines by the European Commission stating that "[t]he use of risk factors which might be correlated with gender [...] remains possible, as long as they are true risk factors in their own right" (European Commission, Guidelines on the application of Council Directive 2004/113/EC to insurance, in the light of the judgment of the Court of Justice of the European Union in Case C-236/09 (Test-Achats), OJ 2012 C 11/1, para. 17). Thus, the legal admissibility of correlated factors crucially depends on whether these factors can plausibly be related to the risks covered by the insurance. In machine learning contexts, where specific factors may not always be reconstructable from the output (particularly in deep neural networks), insurers can ``play it safe" by approximating male and female scores.} they (or an impartial agent) may merely approximate male and female score distributions to one another, instead of making them identical. Here again, it may be of interest to fulfill the ``80 percent rule".

\section{Relation to Other Work}\label{relation}

The field of algorithmic fairness proliferates and is generating a staggering number of definitions of algorithmic fairness. One branch of the field is concerned with \emph{detecting} discrimination in algorithmic decision making (see \v{Z}liobait\.e 2017). The second branch, which concerns us here, seeks to \emph{reduce} discrimination by aligning decision processes with formal definitions of fairness. Without laying claim to completeness, we see definitions as falling into four different categories that can be roughly attributed to different stages in the algorithmic process. First, there are data reconfiguration approaches (also called ``data massaging", in Zemel et al. 2013) that seek to transform \emph{input data} into a fairer representation (pre-processing approaches); examples are Zemel et al. (2013) and Calmon et al. (2017). A second set of approaches seeks to control the \emph{algorithmic process} from the input to the output (in-processing approaches). One subset of this group establishes constraints via a distance metric, generally to ensure individual fairness. Here, examples are Dwork et al. (2012) and Friedler et al. (2016). The distance metric, in their approaches, measures distances between individuals or groups in the observed space. Then, a constraint is introduced to ensure that the output, in what Friedler et al. (2016) call the decision space, does not significantly increase distances between groups or individuals. Another subset similarly seeks to restrain the mapping from input to output data, but by positing equal output probabilities at the group level, such as statistical parity. Examples include again Dwork et al. (2012) and Friedler et al. (2016) with their group fairness definition, but also Chouldechova (2017), Kleinberg et al. (2016), Berk et al. (2018) and Datta et al. (2015) (calibration, statistical parity). Third, another type of data reconfiguration approach transforms the \emph{output data} (post-processing approaches), in our case a raw score delivered by an algorithm, or some other decision procedure. Fourth, a final approach is based on external performance measures, i.e., metrics that analyze in how far the predictions fit real outcomes (reality check approaches); examples include Chouldechova (2017), Kleinberg et al. (2016) and Berk et al. (2018) with measures such as predictive parity, conditional use accuracy equality and error rate balance, as well as Hardt et al. (2017) with equalized odds.

As should have become clear during the discussion of our model, our model uses a post-processing approach. We note that we could apply our transformation to the raw scores calculated in the training phase of a machine learning model; in this way, it could extend from the post- to the pre-processing phase. As a post-processing method, our approach is related to the ones employed by Zemel et al. (2013), Dwork et al. (2012) and Friedler et al. (2016). However, as opposed to Dwork et al. (2012) and Friedler et al. (2016), we do not operate with a general distance metric, and do not include a general Lipschitz condition to guarantee that distances stay approximately the same in Wasserstein space (however, we can guarantee Lipschitz continuity under some constraints, see \eqref{lip}); rather, we measure individual fairness in a least square sense, and aim to ensure fairness by monotonicity, which in turn is achieved through optimal transport. Our raw score can, as discussed above, be understood as a functional equivalent of the distance metric in Dwork et al. (2012, p. 215, 224), with the difference that we \emph{generally} (although not necessarily) envision the raw score already as the result of an algorithmic process. At the group level, for $\theta$ = 1, our model enforces group fairness, understood as statistical parity, and can be understood as a reconstruction of the equal distribution of scores among groups in the construct space as defined by Friedler et al. (2016). We provide a specific measure for this equal distribution, however, by the construct of the barycenter. Furthermore, by varying $\theta$ via displacement interpolation, we are able to continuously navigate between the extreme worldviews (WAE, WYSIWYG) presented in Friedler et al. (2016), and between individual and group fairness. In this, the results of our transformation bear resemblance to Calmon et al. (2017), who, in a pre-processing and discrete setting, seek to achieve group fairness while accounting for individual fairness via a distortion constraint. To the extent that we choose statistical parity as a measure for group fairness, our model is also similar to \v{Z}liobait\.e (2015), who introduces an equal acceptance rate. Finally, in so far as we can guarantee a fulfillment of the ``80 percent rule", our approach is akin to Feldman et al. (2015) and Zafar et al. (2017). 

Another paper that introduces a post-processing constraint that allows for varying degrees of group fairness is Zehlike et al. (2017). It does not, however, operate with the notion of the barycenter. Rather, they use a cumulative distribution function to guarantee (subject to a significance parameter) that a ranking includes \emph{at least} a certain proportion of members from a protected group in its top ranks. One advantage of that measure is that it enforces only a minimum constraint: it leaves the ranking intact if (in the raw ranking) more members of the protected group are included than specified in the proportion constraint. This could be relevant particularly if the disadvantaged group has one low-scoring subgroup, but also one high-scoring subgroup that performs better than the advantaged group. Our framework achieves a similar result if we define a $\theta$ value for a selection process without fixing strict admissions quota; if the disadvantaged group scores particularly high in one year, they would end up closer to the advantaged group, and hence receive higher selection rates. In the alternative, we may split the disadvantaged group into a high- and a low-scoring one, and apply a low $\theta$ to the higher scoring one. This would preserve the high scores of this subgroup, while one may transport the low scores of the other subgroup towards the barycenter by means of a higher $\theta$. 

Furthermore, some papers have now started to apply optimal transport to algorithmic fairness. {Optimal transport has the significant advantage that it allows to manipulate not only partial aspects of the score statistics (like the mean), but the full statistics. It is thus a very natural tool to produce statistical parity. There is a considerable body of literature on fair learning of regression models, such as Calders et al.\ (2013), Fukuchi et al.\ (2015), or P\'erez-Suay et al.\ (2017); regression, however, is a somewhat rigid method, as a specific (e.g.\ linear) ansatz for the predictor is postulated a priori. As a consequence, usually only the first moments, i.e.\ the mean values, of distributions of different groups can be matched, which is a much weaker notion than full statistical parity. See however Kamishima et al.\ (2018) for a matching of the first two moments for a particular model for rating prediction. Optimal transport, on the other hand, allows for genuinely nonlinear transformations of a much more flexible kind and thereby is able to ensure statistical parity. Moreover, it does so in a provably optimal way, as demonstrated by Theorem~\ref{optimalitythm}. }

Del Barrio et al. (2019), which cites our original working paper, similarly formalizes the trade-off between information loss and group fairness. However, they work in the context of classification, restrict their algorithm to two groups, and use a random repair algorithm which is distinct from our linear displacement. Similarly, Johndrow and Lum (2019) propose a repair algorithm using optimal transport theory. However, like Feldman et al.\ (2015), they consider only the unidimensional case, while we generalize to a higher-dimensional setting.

Finally, none of the mentioned technical papers discusses the legal and policy implications of their model to any substantial extent, an issue we address in the following section.

\section{Legal and Policy Implications}
\label{sec:legal}

Because of the proliferation of fairness definitions (see above, Section~\ref{relation}), regulators seeking to implement fairness constraints face a serious selection problem. Against this background, our model offers a convenient and flexible framework to switch between different goals within one single model. In this sections, we briefly discuss three legal and policy implications: the relevance of a continuous scaling for the law; the choice of the substantive value of $\theta$; and possible procedures to implement it.

\subsection{The Relevance of $\theta$ for the Law}
There are two ways in which the possibility to gradually tune the degree of fairness via CFA$\theta$ would benefit regulators, but also companies or other organizations in complying with \emph{anti-discrimination law}.
First, most cases of anti-discrimination law turn on group fairness. Often, when unequal treatment results from the algorithm picking up differences between groups encoded in the training data set, the violation of group fairness can lead to what is called indirect discrimination under EU law and disparate impact under US law (the following discussion holds for both; see Barocas and Selbst 2016, p.~701 for US law; Hacker 2018, pp.~1152 et seq.\ for EU law). As mentioned, while there is no fixed quantitative threshold, legal scholars assume that indirect discrimination is found if the selection procedure deviates from statistical parity so that a member of the disfavored group has a less than 80\% (US) or 75\% (EU) chance, vis-\`a-vis a member of the favored group, of being positively selected (EEOC 2015, Section 4 D.; Barocas and Selbst 2016, p.~701; Hacker 2018, p.~1153). Already at this stage, it is therefore important for companies to be able to gradually tune the degree of group fairness. As there is no bright line quantitative test, the closer they approximate statistical parity, the more likely they are to escape the verdict of indirect discrimination in the first place.

More importantly, however, even if indirect discrimination is found, it is not illegal per se.\footnote{ This becomes apparent in the very definition of indirect discrimination, for example in Art. 2(b) of Directive 2004/113/EC on sex [i.e., gender] discrimination: ''where an apparently neutral [...] practice would put persons of one sex at a particular disadvantage compared with persons of the other sex, unless that [...] practice is objectively justified by a legitimate aim and the means of achieving that aim are appropriate and necessary.''} Rather, it can be justified through a proportionality test: the legitimate interests of the organization using the specific selection procedure (e.g., predictive accuracy with respect to the predicted trait) must outweigh the interests of the disfavored individuals or groups (Tobler 2005, pp. 241 et seq.). In this balancing exercise, the degree of differential treatment, i.e., the degree of the violation of group fairness, is an important parameter: for example, discriminatory practices may not go beyond what is absolutely necessary to reach the legitimate, competing goals of the decision maker (Tobler 2005, p.~242; Sullivan 2005, p. 963 et seq.). Clearly, therefore, the more the procedure discriminates against a certain group, the more difficult it is to justify (Hacker 2018, p.~1164). This shows that, while in the end a selection procedure will be either legal or illegal with respect to anti-discrimination law, the inner workings of this body of law go beyond this binary choice and crucially depend on the degree of discrimination (Selbst 2017, p. 165 et seq.), in other words: the degree of fairness. The algorithm proposed here therefore allows organizations to trade off (i) the degree to which they want to correct a machine-learned result with (ii) the risk of being found liable for violating anti-discrimination law, while simultaneously maximizing the utility of the decision maker at every point of that trade-off. 

Second, however, the law also imposes \emph{constraints on correcting} the results of selection procedures. If re-ranking is achieved by taking protected characteristics into account, the legal limits of positive action under EU law, or affirmative action in US law, need to be heeded, as discussed below. The basis for the societal debate around affirmative action, and for the important body of jurisprudence on it, is precisely the fact that every re-ranking in favor of the members of one protected group, while enhancing group fairness, may simultaneously reverse-discriminate against members of the other group, encroaching on individual fairness between different applicants (Moses 2016; Robinson 2016). If re-ranking is based, e.g., on ethnicity, otherwise similarly situated individuals will be treated differently. In this case, norms of individual equality (equal protection clauses) conflict with provisions of group equality (anti-discrimination law). This is precisely why affirmative action is so contested, and why litigation over it continues. In affirmative action, a complex body of case law has developed both under US (overview in Robinson 2016; Poueymirou 2017; Nainbandian 2000) and EU law (Craig and de B\'urca 2011, p. 909 et seqq.). Importantly, both the US Supreme Court and the CJEU stress that re-ranking remains possible during the ``test-design stage'' of a selection procedure, i.e., before selection results have been allocated to the respective candidates (Bent 2019, p.~35 et seqq.; Kroll et al. 2017, p.~695; Barocas and Selbst 2016, p.~725; Hacker, 2018, p.~1181). 

This leaves ample room for fairness strategies during the development of the machine learning model. As mentioned, our model can be fruitfully applied at the test-design stage (by applying it to the scores of the training data set). However, if the discriminatory effect of the selection algorithm is only discovered once the candidates have already been screened, and raw scores have been assigned to all of them (the selection stage), options for correction are arguably more limited under the law (Waddington and Bell 2001, p.~600). In this case, a delicate balance needs to be struck between individual and group equality, in other words: individual and group fairness (see the discussion in Kroll et al. 2018, pp.~694 et seq.; Kim 2017, pp.~199 et seqq.; Hacker 2018, pp.~1180 et seq.). The choice between them therefore reproduces not only the dilemmas of affirmative action, but also the controversies between meritocratic and outcome-egalitarian conceptions of social justice. For EU law, the CJEU has ruled that at the selection stage, an automatic and unconditional preference of one candidate over another because of protected criteria is incompatible anti-discrimination law.\footnote{ CJEU, case C-450/93, Kalanke, EU:C:1995:322, para 22.}  Rather, it is necessary to take the concrete circumstances of the case, and the qualifications of the candidates, into account when making re-ranking decisions.\footnote{ CJEU, case C-409/95, Marschall, EU:C:1997:533, para 33.} Hence, some degree of human oversight of the re-ranking is necessary -- the results of the fair algorithm need to be checked by a mechanism ensuring that re-ranking based on protected criteria is not automatic. Similar constraints of holistic review are in place under US law (Robinson 2016, pp. 192-194; cf. also Malamud, 2015 p.~14). Hence, fixed minority quota at the selection stage would fail under both EU and US law.\footnote{ CJEU, case C-450/93, Kalanke, EU:C:1995:322, para 22; US Supreme Court, Fisher II, 136 S. Ct., p. 2210.}

As mentioned, the algorithm proposed here does not enforce strict quotas, even under a fixed $\theta$ value. Rather, it precisely facilitates the legally required trade-off and fine-tuning to the concrete case. It does not uniformly enforce one set of quotas or, more generally, one specific re-ranking outcome; instead, $\theta$ can be chosen consciously with respect to the concrete selection decision. Moreover, the algorithm's transformation takes the candidates' observable qualifications as embodied in the raw score as a basis, thus arguably fulfilling the CJEU criteria just mentioned, particularly when paired with human-level overview of the re-ranking process (see also Craig and de B\'urca 2011, p. 915; Hacker 2018, p. 1181). From a legal point of view, it will be necessary but also sufficient for the final decision (credit allocation; admission) to be made by a human decision maker who has some leeway to overrule the results of the scoring process.\footnote{ Cf. CJEU, Case C-158/97, \emph{Badeck}, EU:C:2000:163, paras. 55 and 63 (concerning selection for training and interview).}  This, in turn, helps to ensure compliance with data protection law, too (cf.\ Art.~22(1) of the EU General Data Protection Regulation).

In principle, the more we slide $\theta$ towards group fairness, the less we can guarantee individual fairness between members of different groups, and vice versa. If courts or regulators give an indication concerning the exact trade-off between these two goals, it can be implemented with our algorithm. To prevent litigation, companies could, pre-emptively, set the parameter so that they minimize the risk of running afoul of affirmative action rules. Ultimately, however, it is important to note that automated technological fixes alone, such as algorithmic re-ranking, will not resolve the conflict between individual and group equality; rather, algorithmic fairness procedures, at least post-processing approaches, must be paired with a human review of re-ranking decisions to pass muster before affirmative action law.

\subsection{The Choice of $\theta$}

Within this legal framework, the choice of the \emph{substantive value} of $\theta$ is obviously crucial for our model. Ideally, the debate about what $\theta$ to choose for different situations of algorithmic decision making ought to be governed by a broad democratic discourse. Despite the formality of the model, its core features, we believe, can be discussed outside of a technocratic framework and may even become the subject of discussion in legislative bodies. As a contribution to this discourse, we would like to offer the following suggestions for the choice of an adequate $\theta$.

First, the more we can be sure that the raw scores, and the raw distributions, capture ground truth and are free from bias and other exogenous distortions, the less there is a need to transform the raw data into a fair representation; $\theta$ may then assume a relatively smaller value than in cases in which we suspect significant bias. Conversely, if we have reason to believe that the training data are ridden with bias, and if ground truth is not available, a high $\theta$ score seems attractive (see also Zafar et al. 2017, p. 2).

Second, since the choice of $\theta$ governs the trade-off between more group fairness (high $\theta$) and more individual fairness (low $\theta$), as discussed in the introduction and the model, we suppose that different $\theta$ values will be appropriate for different situations. In fact, it is the beauty of the model that it allows us to adapt to different areas of algorithmic decision making depending on whether we want to strengthen group or individual fairness. As a tentative suggestion, we could say in branches that deal with decisions that decisively shape the socioeconomic prospects and capabilities of applicants for years, selection procedures must be closer to statistical parity to avoid indirect discrimination. Hence, group fairness, i.e., statistical parity, could be a greater concern in areas that a) primarily form the basis of future life opportunities (e.g., high school or college admissions; high-value credits); or that b) directly relate to basic needs (such as access to housing; to justice; or to health insurance). Conversely, in branches with less implications for the long-term socioeconomic position of applicants, individual fairness could be a greater motivation. This may concern areas where, primarily, past performance or events are evaluated (e.g., job applications, perhaps also access to smaller lines of credit). Hence, regulators or courts could refer to our algorithm in establishing different quantitative discrimination thresholds (or corridors) for different sectors of the economy.

Clad in academic metaphors, group fairness could be more important in situations similar to the granting of scholarships (enabling future flourishing), and individual fairness for awards (recognizing past achievements). Evidently, many areas will combine elements of both patterns; this is precisely where the advantage of our model lies that allows for intermediate degrees combining individual and group fairness. 

We would also like to stress that some areas may necessitate yet other fairness metrics. For example, in criminal justice sentencing decisions, the different costs of false positives and false negative decisions for the concerned individuals suggests a focus on error rate balance, (see, for the controversy surrounding the COMPAS algorithm, Chouldechova 2017; Kleinberg et al. 2016; Berk et al. 2018). The error rate balance is a statistical measure at the group level, but is not equivalent to statistical parity. One important future extension of our model, therefore, will be to include this measure as well.

\subsection{Designing Procedures to Implement CFA$\theta$}

On the \emph{procedural} side, models of algorithmic fairness call for new types of regulatory implementations (Veale and Binns 2017; Hacker 2017). We are aware that this raises several challenges for real-world implementation. To illustrate them, let us briefly return to the example of credit scoring. Our model necessitates three separate procedural steps: first, the computation of the raw score predicting credit worthiness; second, its transformation into a fair score; and, third, the application of the decision rule to this fair representation.

We take the first step, the calculation of the raw scores, as a given. This is precisely where machine learning models enter the scene. However, we do want to note that, already at this step, bias minimization techniques could apply (see, e.g., Calders and Verwer 2010 and the brief overview in section ``Relation to Other Work"). For the second step, an impartial, trusted party may be needed to perform the transformation of the raw into the fair representation. Moreover, ideally based on legislative guidance, this party also needs to determine the value of $\theta$. One key policy question will be whether private companies should take on the role of an impartial party, as in financial auditing; or whether a government agency should be endowed with this task (see Tutt 2017; Wachter et al. 2017, p. 98, for a discussion). Both solutions raise questions of conflicts of interest, capture etc. After the transformation of the scores, the banks needs to know the fair distribution. The application of the decision rule, by the decision maker, based on the fair rather than the raw data seems straightforward to us. 

In the end, implementing a framework for any fair decision making model in the algorithmic context will provide quite a challenge in the real world. With our model, we hope to facilitate the most relevant trade-offs involved, and to make transparent the different design choices that policymakers face.

\section{Experiments}
In this final section, we implement CFA$\theta$ and experimentally evaluate its performance on two data sets. First, we use synthetic data to show that our approach can be used to gradually reduce disparate impact between different groups by increasing $\theta$. Second, we apply CFA$\theta$ to a real-world data set, the law school data set comprising admissions and performance data of law school students. 

A note on terminology: We do not distinguish between protected and non-protected groups, only between protected and non-protected features, and between (initially) advantaged and disadvantaged groups. In legal terms, equal treatment law generally forbids direct or indirect discrimination against \emph{any} group singled out by a protected feature (e.g., male or female). In this sense, all groups defined by protected criteria are protected. Of course, in affirmative action settings, one group is usually underrepresented, i.e., initially disadvantaged. This does not imply, however, that members of the advantaged group are non-protected. Rather, the legal issues surrounding affirmative action, for example in the doctrine of the Court of Justice of the EU (see Section~5), arise precisely from the fact that legal protection is granted even to members of advantaged protected groups under antidiscrimination law.
\subsection{Data Sets}
\subsubsection{Synthetic Data}
Our synthetic data set consists of raw scores for 100,000 individuals. For each individual, there are two categories of protected features: gender (0, 1) and ethnicity (0, 1, 2), yielding six groups in total (Group 1: [0,0]; Group 2: [0,1] etc.), of which three are disadvantaged by their score distributions (Figure~4). Raw scores are (almost) normally distributed integers with different means and standard deviations for each of the six resulting groups. The individual values range from 3 to 88 and are distributed at random among the individuals, subject to the normal distribution constraint. We can imagine these raw scores to be credit scores calculated by a credit score agency such as FICO in the US or SCHUFA in Germany.
\begin{figure}\label{fig1}
\caption{Raw scores for synthetic data}
\includegraphics[width=0.80\textwidth]{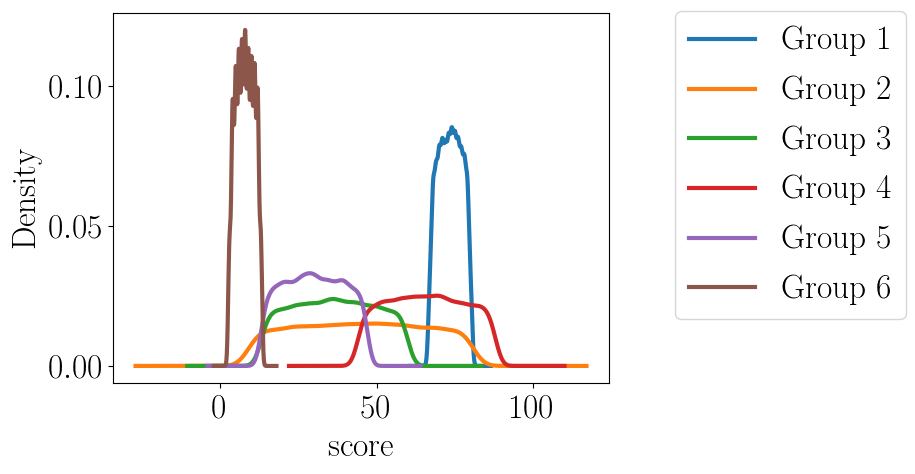}
\end{figure}
\subsubsection{Real-World Data: Law School Admissions}
The law school data set was first described by Wightman (1998). It was used for the study of differential bar exam passage rates between different ethnic groups, out of a concern that certain metrics or admissions policies disfavor ethnic minority groups, particularly black students. The data set includes anonymized data on 21,792 students in total and also comprises two categories of protected features: gender (male, female) and ethnicity (White, American Indian, Asian, Black, Hispanic, Mexican, Other). Women make up a little less than 44\% of all students, White students 84\%, American Indians around 0.5\%, Asians less than 4\%, Blacks less than 6\%, Hispanics around 2\%, Mexicans less than 2\%, and others around 1\%.
Furthermore, for each student, it records three non-protected features: the LSAT (law school admissions test) score; the undergraduate grade-point average; and the average grade at the end of the first year of law school (as a z-score: ZFYA). We use the LSAT score as the raw score for our analysis. LSAT scores range from 11 to 48 in the data set.
In an admissions setting, for example, a ranking model might be trained, using the law school data set as training data, to predict an equivalent of an LSAT score as the relevant admission score for the selection of the top k-ranked individuals. Disparities in the LSAT score would translate into disparities in the admissions policy. Hence, assigning fair LSAT scores remains an important task. Fairness measures do not differ much between gender groups, however. Therefore, we focus on the largest ethnic minority, black students, whose LSAT score distribution significantly and negatively differs from the other groups' score distributions (see Figure~5). 

\begin{figure}
\caption{LSAT scores by ethnicity}
\includegraphics[width=0.80\textwidth]{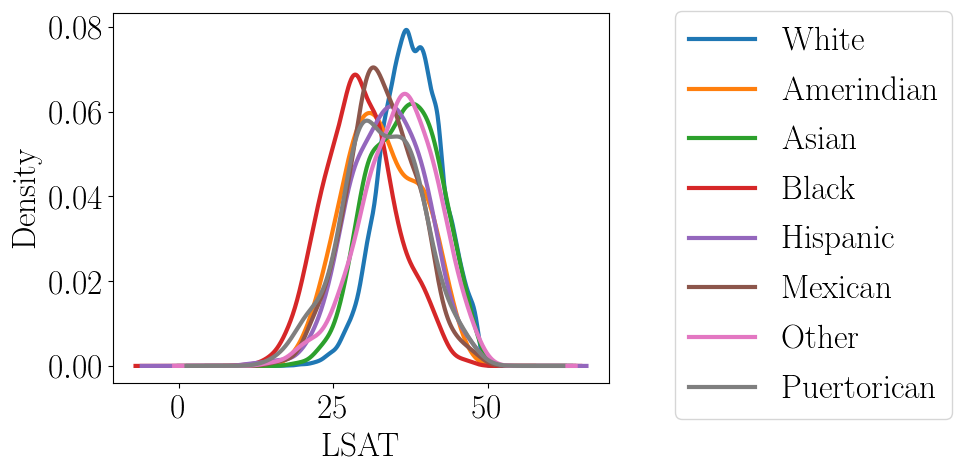}
\end{figure}

\subsection{Experimental Setting}
In the experiments, different parameters can be adjusted. First, the bin size determining the intervals in which scores are considered by the algorithm can be varied; for example, if one group consists of relatively few members, the bin size can be expanded so that a sufficient number of individuals fall into the bin. The default standard bin size for raw scores, chosen here, is 1. 

Second, most importantly, the $\theta$ parameter can be set anywhere between 0 and 1 for each group separately, depending on the desired trade-off between utility/individual fairness on the one hand and group fairness on the other hand. As explained in the model, a $\theta$ value of 0, in theory, produces fair scores that are identical to the raw scores, hence implementing a WYSIWYG worldview. Conversely, in theory, a $\theta$ score of 1 maps all raw scores fully onto the corresponding total barycenter scores. If chosen for all groups, it should therefore fully align distributions between the groups and implement a WAE worldview. Intermediate $\theta$ values between 0 and 1 correspond to an interpolation between these two extremes.

To achieve the transformation from raw to fair scores, we use the Python Optimal Transport library of Flamary \& Courty 2017. To experimentally measure the performance of CFA$\theta$, we implement two widely used performance metrics (see, e.g., Qin et al. 2010, at p. 360): precision at position k ($P@k$) and normalized discounted cumulative gain (NDCG).
$P@k$ describes by how much the fair ranking diverges from the raw ranking. Therefore, it approximately measures our individual fairness error (equation 2.9). More specifically, precision at position k evaluates the top k items of a ranking, based on whether they are relevant or irrelevant for the decision maker. We define an item (= individual) in the re-ranked fair score ranking as relevant if and only if the candidate was included in the top k in the raw ranking, otherwise as irrelevant. NDCG, by contrast, uses multiple levels of relevance to evaluate the top k-ranked items of a ranking (J{\"a}rvelin and Kek{\"a}l{\"a}inen 2002). For NDCG, gains are calculated directly using the individual scores. Depending on the rank, the gain is discounted logarithmically ($1 / \log_{2}(n + 1)$), so that gains by lower-ranked individuals count substantially less. This reflects the fact that they will often be considered less frequently by the decision maker. All discounted gains are summed up until position k and then normalized onto the interval $[0; 1]$ by a normalization coefficient reflecting an ideal ranking. In our context, for the purpose of measuring differences between the raw and the fair ranking, the raw score ranking functions as the ideal ranking. We note that this is a strong assumption and usually not realistic, as a trained IR model is not 100 \% correct with respect to the training data. We expect NDCG differences to be much smaller if one compares the differences between ground truth and predictions on the one hand and ground truth and fair score adjustment on the other. However, in our particular setting, we assume a 100 \% correct model to evaluate the maximum utility loss that our method can introduce to a ranking.

Furthermore, we evaluate fairness gains using an adaptation of a fairness measure which is widely used in the literature: the disparity measure (see Yang et al. 2018). The disparity measure is defined as the ratio of two proportions: the proportion of positively selected members of one protected but disadvantaged group compared to the proportion of that group in the entire data set. It makes particular sense to use this measure because a disparity measure of less than 0.8 (US) or 0.75 (EU) usually indicates indirect discrimination in antidiscrimination law (EEOC 2015, Section 4 D.). We operationalize this measure for our setting as in Yang et al., 2018, at p. 3. Both the raw and the fair scores can be used to rank the individuals. The top-k individuals are positively selected. What we measure directly is the share of positively selected individuals for each group at every k (Figures~12, 13 and 18). We then calculate at what cut-off point the disparity measure of the disadvantaged groups reaches 80\%.

Due to the differing number of individuals in the two data sets, we ran the evaluations of the experiments with a step size (for the k values) of 1,000 for the synthetic data and of 100 for the LSAT data. The findings do not change significantly, however, for other step sizes. 
\subsection{Results}
\subsubsection{Synthetic Data}
In our experiment with synthetic data, we perform three distinct fair score transformations from the raw scores, with $\theta$ values of 0, 0.5 and 1 for each group, respectively. As can be seen in the plots (Figure~5), the raw score distributions of the different groups differ significantly.

While $\theta$=0 reproduces this fact, increasing $\theta$ gradually has the group distributions converge toward the barycenter, see Figures~6--8. 

\begin{figure}
\caption{$\theta=0$}
\includegraphics[width=0.80\textwidth]{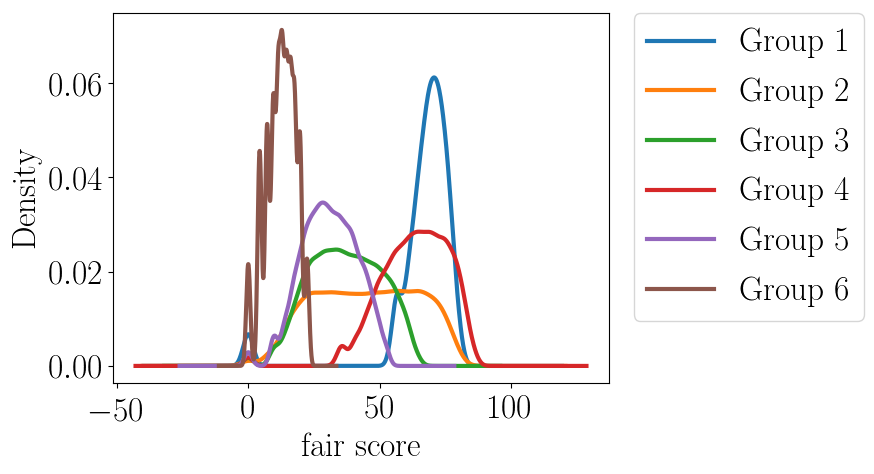}
\label{fig2}
\end{figure}
\begin{figure}
\caption{$\theta=0.5$}
\includegraphics[width=0.80\textwidth]{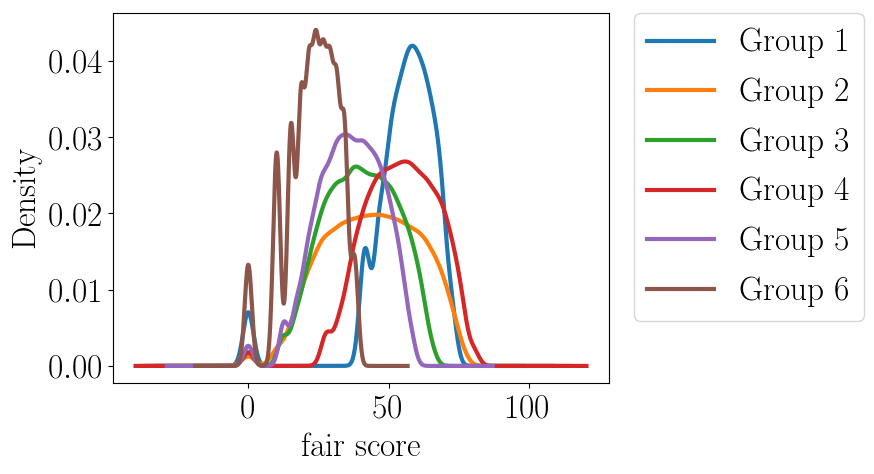}
\label{fig3}
\end{figure}
\begin{figure}
\caption{$\theta=1$}
\includegraphics[width=0.80\textwidth]{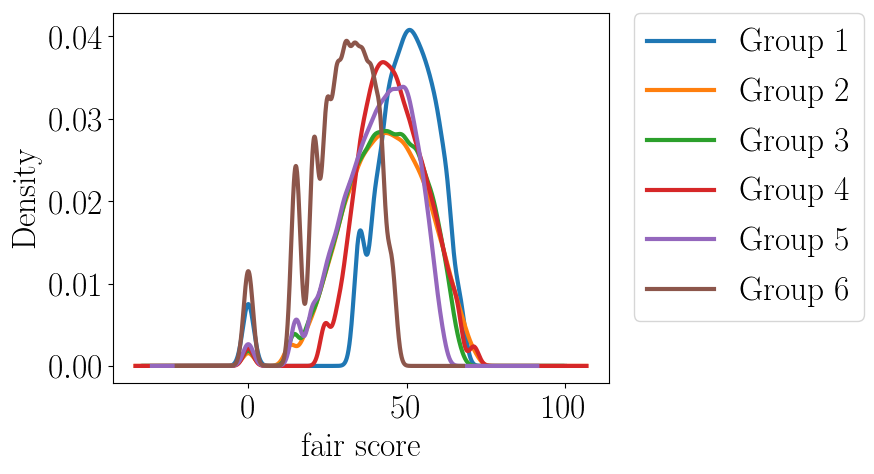}
\label{fig4}
\end{figure}

The performance evaluation reveals the following (Figures 9--11): 
\begin{figure}
\caption{Performance for $\theta=0$}
\includegraphics[width=0.80\textwidth]{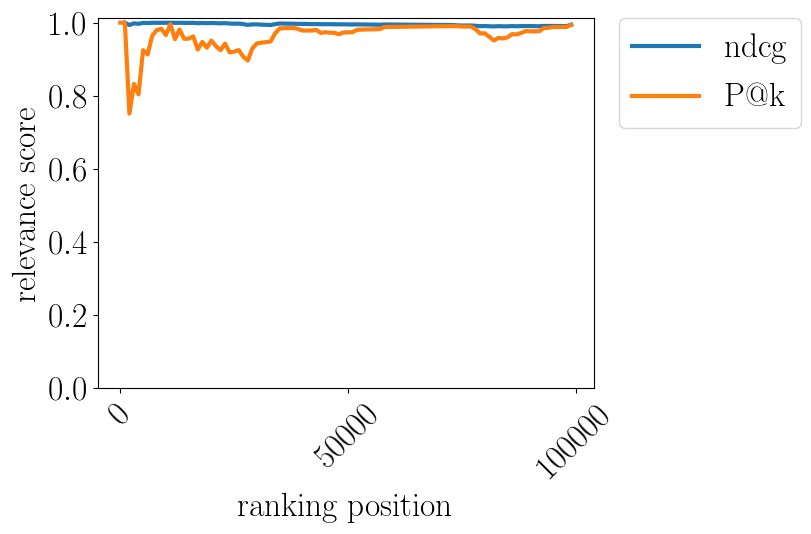}
\label{fig5}
\end{figure}
\begin{figure}
\caption{Performance for $\theta=0.5$}
\includegraphics[width=0.80\textwidth]{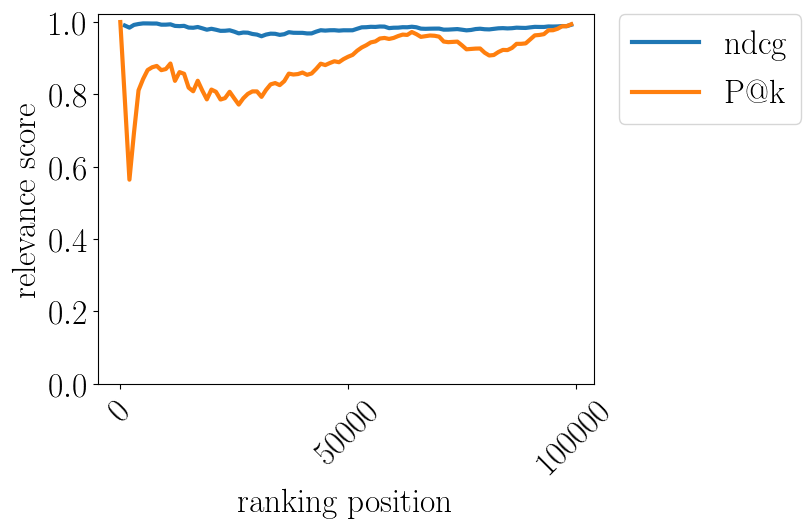}
\label{fig6}
\end{figure}
\begin{figure}
\caption{Performance for $\theta=1$}
\includegraphics[width=0.80\textwidth]{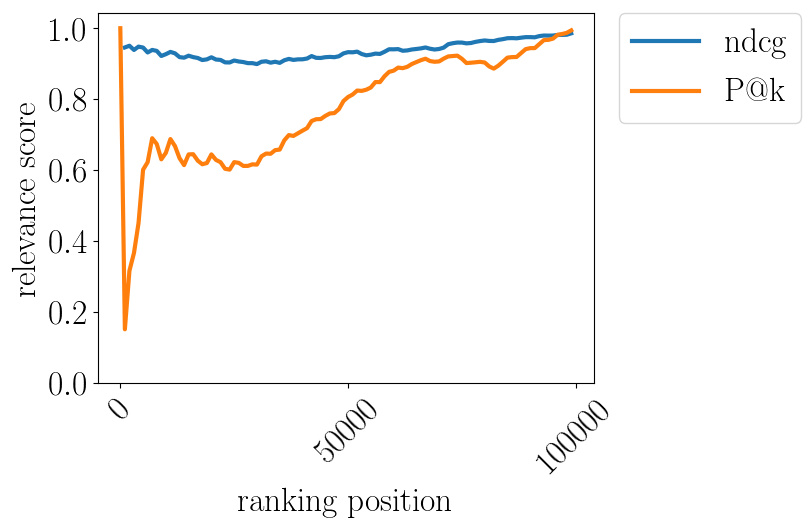}
\label{fig7}
\end{figure}
 for $\theta$=0, while the NDCG value, as expected, remains very close to 1, the $P@k$ value drops from about rank 1,000 to a minimum at 0.751 around rank 2,000. From a strictly mathematical viewpoint, $P@k$ should always equal 1 as the raw ranking should be fully preserved under $\theta=0$. While the $P@k$ value subsequently recovers and remains above 0.89 after rank 5,000, the drop in the upper part of the ranking reflects the limitations of our model discussed above: the algorithm has some difficulties in adjusting to groups that are either too small or exhibit too little variance within a certain interval. This implies, for practical purposes, that instead of using fair scores for $\theta=0$, the raw scores should be used.
The performance is much better for higher $\theta$ values. We lose only a maximum of 3.9 \% in NDCG performance for $\theta=0.5$ and of only 10.1 \% for $\theta=1$. The data therefore does not contradict our mathematical conclusion that the optimality of the transport preserves decision maker utility to a maximum. $P@k$ varies and drops substantially for both $\theta=0.5$ and $\theta=1$, which reflects the fact that the algorithm does re-rank individuals, testifying to an increase in what has here been termed individual fairness error. 
Simultaneously, the fairness evaluation (Figures 12 and 13)
\begin{figure}
\caption{Fairness evaluation for $\theta=0.5$}
\includegraphics[width=0.80\textwidth]{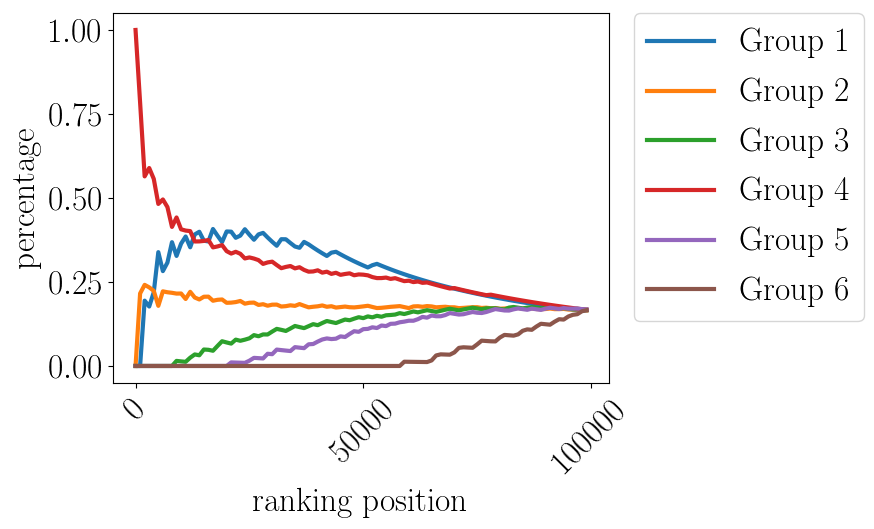}
\label{fig8}
\end{figure}
\begin{figure}
\caption{Fairness evaluation for $\theta=1$}
\includegraphics[width=0.80\textwidth]{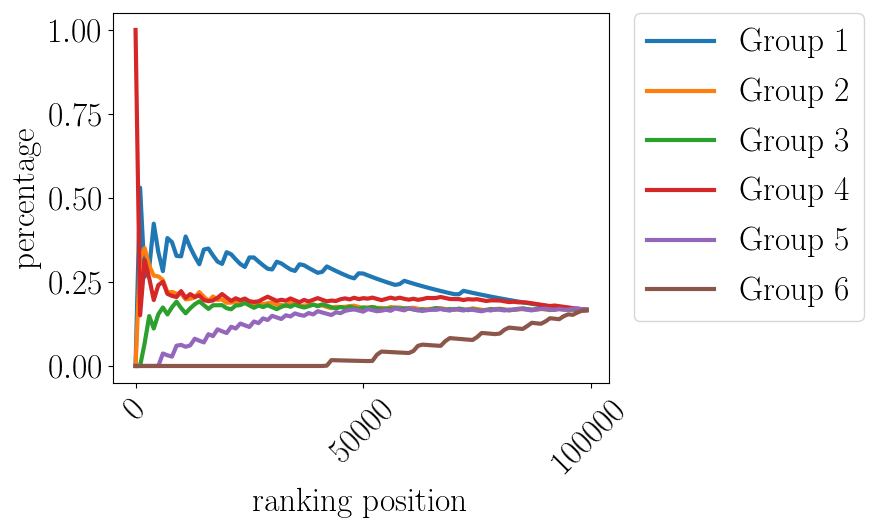}
\label{fig9}
\end{figure}
shows that already for $\theta=0.5$, but even more so for $\theta=1$, many more individuals of the initially disadvantaged groups are now listed in the higher parts of the ranking. In our data, we have three disadvantaged groups: Group 3, Group 5 and Group 6 (ranging from least to most disadvantaged). For $\theta=0$, they reach a disparity measure of 0.8 at the following ranking positions: from rank 53,000 on (Group 3); 64,000 (Group 5); and 95,000 (Group 6).\footnote{ We calculate these thresholds as follows: Each group has a share of 16.6 \% of the entire population. Hence, to reach a disparity measure of more than 0.8, the groups need to have a positive selection rate of at least 13.28 \%. Group 3 reaches this threshold from rank 53,000 on, with 13.42 \%; Group 5 at rank 64,000 with 13.5 \%; and Group 6 at rank 95,000 with 13.6 \%. For each of the other fairness evaluations, we calculate the thresholds correspondingly.} For $\theta=0.5$, the disparity threshold is reached earlier: at rank 46,000 (Group 3); 60,000 (Group 5); and 93,000 (Group 6). For $\theta=1$, the threshold is further lowered, quite substantially, to rank 5,000 (Group 3); 27,000 (Group 5); and 91,000 (Group 6). 

Therefore, the data shows that indeed, as expected, an increase in $\theta$ increases the individual fairness error (and hence decreases decision maker utility), but also increases group fairness. Again, we need to stress that the results imply that our algorithmic implementation of the model works well for large groups and data sets in which a large number of individuals are eventually considered for a decision. It is not suitable for decision making contexts in which only the first 10-100 individuals are accepted. This is due to a loss in precision of the optimal transport procedure as we move from the continuous mathematical framework to the necessarily approximative algorithmic implementation.
\subsubsection{LSAT}
The experiment with the LSAT data set is restricted to the $\theta$ values of 0 and 1. Qualitatively speaking, the results of the experiment with synthetic data are reproduced (Figures 14 and 15): the higher $\theta$ value leads to an increased individual fairness error (= decreased decision maker utility), but to an increase in group fairness.
\begin{figure}
\caption{LSAT data for $\theta=0$}
\includegraphics[width=0.80\textwidth]{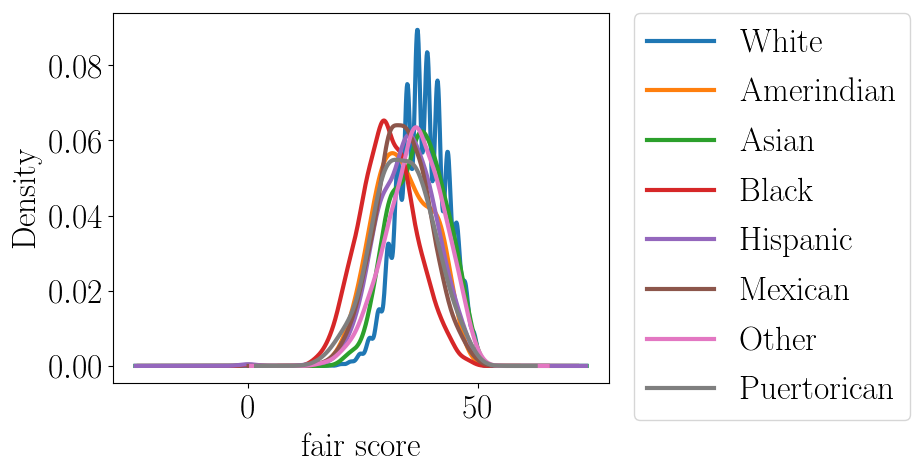}
\label{fig10}
\end{figure}
\begin{figure}
\caption{LSAT data for $\theta=1$}
\includegraphics[width=0.80\textwidth]{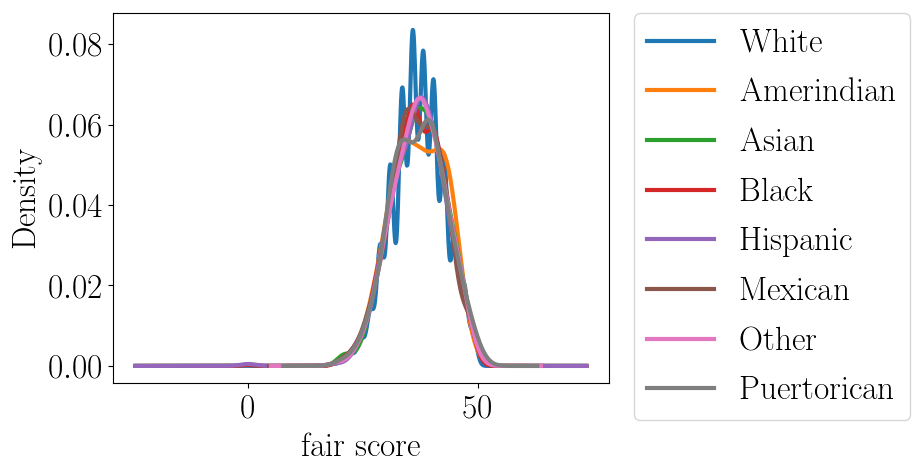}
\label{fig11}
\end{figure} 
However, three additional features should be noted. First, while the NDCG scores for both $\theta$ values remain very close to one, the $P@k$ values drop a little more than in the synthetic data set, the minimum under $\theta=0$ being reached at 0.717 for rank 600. Second, however, NDCG performance is significantly better than with synthetic data, with a maximum loss, for $\theta=1$, of 1.2 \% at rank 1,500 (Figures 16 and 17). 
\begin{figure}
\caption{Performance at $\theta=0$}
\includegraphics[width=0.80\textwidth]{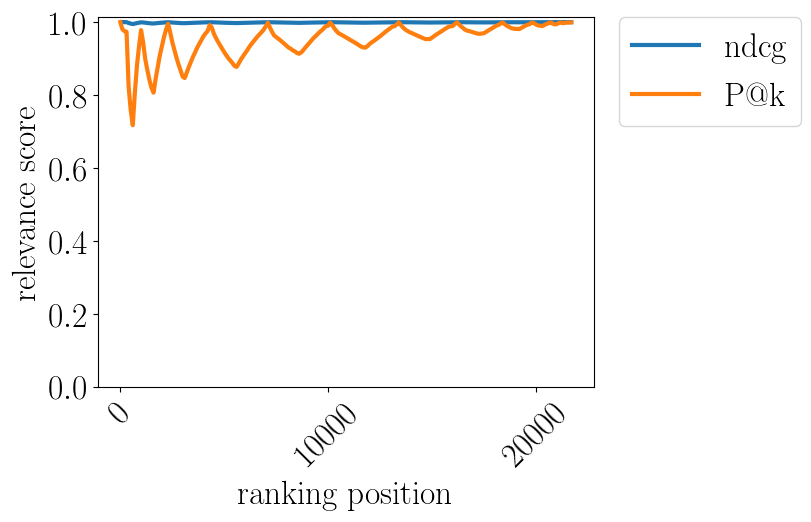}
\label{fig12}
\end{figure}
\begin{figure}
\caption{Performance at $\theta=1$}
\includegraphics[width=0.80\textwidth]{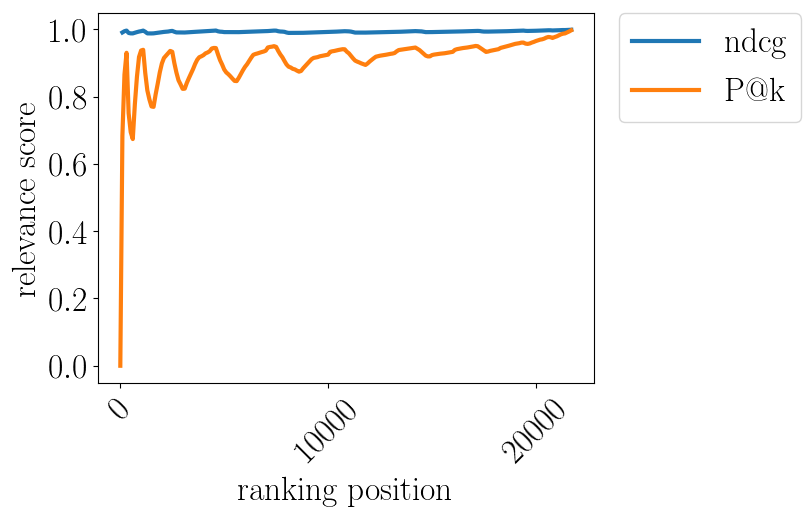}
\label{fig13}
\end{figure}

Third, the fairness evaluation (Figure 18)
\begin{figure}
\caption{Fairness evaluation for LSAT, $\theta=1$}
\includegraphics[width=0.80\textwidth]{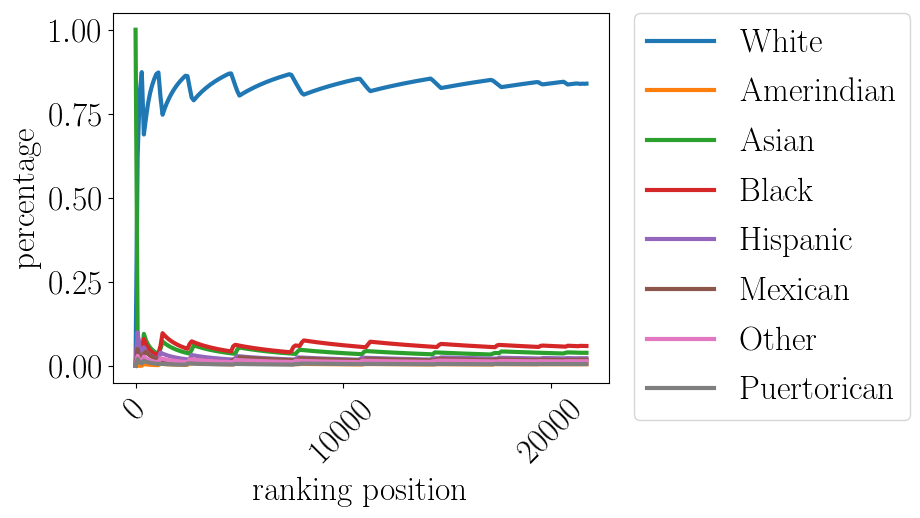}
\label{fig14}
\end{figure}
shows that disadvantaged groups do perform significantly better under $\theta=1$. The most disadvantaged group, black students, crosses the 0.8 disparity threshold for $\theta=0$ only at rank 21,300; for $\theta$=1, it is lifted above 0.8 already at rank 100, then drops again below that measure, to remain above 0.8 again for ranks 400-600, 1,200-6,500, and then from 7,600 on. Hence, in a typical law school class of 500 students, we would find disparate impact under $\theta=0$ (i.e., using the raw scores): the disparity measure is less than 0.2. However, for $\theta=1$, the disparity measure is around 1, averting disparate impact vis-\`a-vis black students.
\subsubsection{Running Time and Scalability}
The algorithm was implemented on a standard laptop computer. The running time for the re-ranking ranged between a maximum of 12.3 s for $\theta=1$ for the synthetic data and a minimum of 6.5 s for $\theta=1$ for the LSAT data set. As our synthetic data set contained 100,000 individuals, this shows that our algorithm scales well to data sets with large numbers of individuals.

\section{Conclusion}

This paper presents a new algorithm which harnesses optimal transport theory to maximize decision maker utility under fairness constraints. The algorithm facilitates the trade-off between different fairness measures (individual fairness error and group fairness) as well as different fairness worldviews (WYSIWYG and WAE). Since the degree of implementation of group fairness can be varied, the algorithm helps decision makers adapt their models to varying legal constraints in different situations. This finding is validated in a number of data-driven experiments. The key take-away from the experiments is that the method does not perform well if groups are small. However, when groups are large, it presents a fast, scalable and functional way to address issues of discrimination in a variety of decision making contexts, from credit decisions to insurance and large admissions decisions.


\end{document}